\newtheorem{lemma}{Lemma}
\newtheorem{theorem}{Theorem}
\DeclareMathAlphabet{\mathbsf}{OT1}{cmss}{bx}{n}% bold sans serif
\DeclareMathAlphabet{\mathssf}{OT1}{cmss}{m}{sl}% slanted sans serif
\DeclareMathAlphabet{\mathcsf}{OT1}{cmss}{sbc}{n}% condensed sans serif
\newcommand{\svv}[1]{\mathbf{#1}}
\DeclareSymbolFont{bsfletters}{OT1}{cmss}{bx}{n}  
\DeclareSymbolFont{ssfletters}{OT1}{cmss}{m}{n}
\DeclareMathSymbol{\bsfGamma}{0}{bsfletters}{'000}
\DeclareMathSymbol{\ssfGamma}{0}{ssfletters}{'000}
\DeclareMathSymbol{\bsfDelta}{0}{bsfletters}{'001}
\DeclareMathSymbol{\ssfDelta}{0}{ssfletters}{'001}
\DeclareMathSymbol{\bsfTheta}{0}{bsfletters}{'002}
\DeclareMathSymbol{\ssfTheta}{0}{ssfletters}{'002}
\DeclareMathSymbol{\bsfLambda}{0}{bsfletters}{'003}
\DeclareMathSymbol{\ssfLambda}{0}{ssfletters}{'003}
\DeclareMathSymbol{\bsfXi}{0}{bsfletters}{'004}
\DeclareMathSymbol{\ssfXi}{0}{ssfletters}{'004}
\DeclareMathSymbol{\bsfPi}{0}{bsfletters}{'005}
\DeclareMathSymbol{\ssfPi}{0}{ssfletters}{'005}
\DeclareMathSymbol{\bsfSigma}{0}{bsfletters}{'006}
\DeclareMathSymbol{\ssfSigma}{0}{ssfletters}{'006}
\DeclareMathSymbol{\bsfUpsilon}{0}{bsfletters}{'007}
\DeclareMathSymbol{\ssfUpsilon}{0}{ssfletters}{'007}
\DeclareMathSymbol{\bsfPhi}{0}{bsfletters}{'010}
\DeclareMathSymbol{\ssfPhi}{0}{ssfletters}{'010}
\DeclareMathSymbol{\bsfPsi}{0}{bsfletters}{'011}
\DeclareMathSymbol{\ssfPsi}{0}{ssfletters}{'011}
\DeclareMathSymbol{\bsfOmega}{0}{bsfletters}{'012}
\DeclareMathSymbol{\ssfOmega}{0}{ssfletters}{'012}
\newcommand\figSize{0.75}
\DeclareRobustCommand{\PowerMat}[1][Nt]{\ensuremath{P}}
\newcommand{\SNR}{\text{$\mathsf{SNR}$}}
\DeclareRobustCommand{\prob}[1][{\rm Pr}]{\ensuremath {{#1}}}
\DeclareRobustCommand{\aNorm}[1][aNorm]{\ensuremath {\|{\bf a}\|}}
\DeclareRobustCommand{\Nt}[1][Nt]{\ensuremath {N}}
\DeclareRobustCommand{\alNt}[1][Nt]{\alpha(\Nt)}
\begin{document}
\allowdisplaybreaks

%\fancyfoot[LE,RO]{\thepage}           % page number in "outer" position of footer line
%\fancyfoot[RE,LO] % other info in "inner" position of footer line
\title{Simple Bounds for the Symmetric Capacity of the Rayleigh Fading Multiple Access Channel}
%\title{Distributed Expurgation for Discrete Multiple-Access Channel via Linear Codes}

% author names and affiliations
% use a multiple column layout for up to three different
% affiliations
%\author{Elad Domanovitz and Uri Erez, \textit{Member, IEEE}

% \author{\IEEEauthorblockN{Michael Shell}
% \IEEEauthorblockA{School of Electrical and\\
% Computer Engineering\\
% Georgia Institute of Technology\\
% Atlanta, Georgia 30332--0250\\
% Email: mshell@ece.gatech.edu}
% \and
% \IEEEauthorblockN{Homer Simpson}
% \IEEEauthorblockA{Twentieth Century Fox\\
% Springfield, USA\\
% Email: homer@thesimpsons.com}
% \and
% \IEEEauthorblockN{James Kirk\\
% and Montgomery Scott}
% \IEEEauthorblockA{Starfleet Academy\\
% San Francisco, California 96678-2391\\
% Telephone: (800) 555--1212\\
% Fax: (888) 555--1212}}

\author{Elad Domanovitz and Uri Erez\thanks{The work of E. Domanovitz and U. Erez was supported in part by the Israel
Science Foundation under Grant No. 1956/15 and by the Heron consortium
via the Israel Ministry of Economy and Industry.

The material in this paper was presented in part at the 2018 IEEE
International Symposium on Information Theory, Vail, CO.

E. Domanovitz and U. Erez are with the Department of Electrical Engineering -–- Systems, Tel Aviv University, Tel Aviv, Israel (email: domanovi,uri@eng.tau.ac.il).}
% \and
% \IEEEauthorblockN{Uri Erez}
% \IEEEauthorblockA{%Dept. of EE-Systems,
% %TAU\\
% Tel Aviv University \\
% Email: uri@eng.tau.ac.il}

%\thanks{This work was supported in part by the Israel Science Foundation under Grant No. 1956/15.}
%\thanks{E. Domanovitz and U. Erez are with the Department of Electrical Engineering -- Systems, Tel Aviv University, Tel Aviv, Israel (email: domanovi,uri@eng.tau.ac.il).}
}
\maketitle

\begin{abstract}
%To be considered %for the 2018 IEEE %Jack Keil Wolf %ISIT Student Paper %Award.

%THIS PAPER IS ELIGIBLE FOR THE STUDENT PAPER AWARD.

Communication over the i.i.d. Rayleigh slow-fading MAC is considered, where all terminals are equipped with a single antenna. Further, a communication protocol is considered where all users transmit at (just below) the symmetric capacity (per user) of the channel, a rate which is fed back (dictated) to the users by the base station. Tight bounds are established on the distribution of the rate attained by the protocol. In particular, these bounds characterize the probability that the dominant face of the MAC capacity region contains a symmetric rate point, i.e., that the considered protocol strictly attains the sum capacity of the channel. The analysis provides a non-asymptotic counterpart to the diversity-multiplexing tradeoff of the multiple access channel.
Finally, a practical scheme based on integer-forcing and space-time precoding is shown to be an effective coding architecture for this communication scenario.

\end{abstract}

%=============================================================================

%
\section{Introduction}
%
%=============================================================================
%\blfootnote{978-1-5090-2152-9/16/\$31.00 \copyright2016 IEEE}
\label{sec:intro}

In this paper we consider communication over the slow (block) fading i.i.d. Rayleigh multiple access channel (MAC). 
For a given realization of the channel gains, the channel reduces to the classical Gaussian MAC, the capacity region of which is well known, see e.g., \cite{CoverBook}.

A basic criterion for analyzing the performance of different access methods is the gap from the sum-capacity (the maximal total rate that can be achieved by all the users). We note however, that in many cases, the rate distribution between different users is also of interest and in many applications, fairness is sought and a scheme which provides (maximal) equal rate to all users is desired. 

The maximal rate that can be achieved in a system where all users have equal rate is denoted as the symmetric capacity. In case the symmetric and the sum capacity coincide (alternatively the case where the dominant face of the MAC capacity region contains a symmetric-rate point), fairness can be achieved without sacrificing performance. As this is a very desirable working point, it is of interest to investigate what is the probability of this being the case. 

Some intuition to that question can be inferred from the diversity multiplexing tradeoff (DMT) of the i.i.d. Rayleigh fading MAC which
%i.i.d. Rayleigh MIMO-MAC 
provides an asymptotic analysis of the symmetric capacity \cite{2004:DMT_MAC}. As we show next, at high values of signal-to-noise ratio (SNR), the symmetric capacity approaches the sum capacity with high probability. 

In this paper we characterize the behaviour of the symmetric capacity for finite SNR. From this characterization, the probability of getting fairness for ``free'' for all SNRs can be easily deduced.

Another motivation for studying the symmetric capacity comes from another design criterion which is the amount of coordination needed by the protocol. High level of coordination results in high throughput loss when finite block length coding is taken into account or increased latency. As the number of users that are simultaneously transmitting increases the amount of coordination increases and thus its impact increases. This is a major issue for new applications being developed for next generation wireless networks (see, e.g.,  \cite{boccardi2014five,andrews2014will})
where supporting high number of users is required along with guaranteeing low latency. 

In theory, transmission at rates approaching the symmetric capacity requires minimal coordination; namely a single parameter, the common code rate all users should use. Nonetheless, when is comes to practical schemes that are able to approach this operating point, hitherto practical applicable transmission schemes
have relied on a much higher degree of coordination. 

Specifically, both time sharing of the points achievable via successive interference cancellation as well as rate splitting are asymmetric between the users and thus require coordination. 
Furthermore, orthogonal multiple access techniques (e.g., time or frequency division multiple access) 
also require coordination to achieve its maximal achievable symmetric-rate point which further falls short of the symmetric capacity (unless the latter coincides with the sum capacity).

% Whenever multi-user transmission is considered, an important trade-off between resource utilization (and thus improved performance) and complexity exists. Dividing the resources (either time, frequency or code) between the users results with
% a simpler receiver but suffers from inherent performance loss compared to the case where several users utilize together the same resource.

% Many systems (for example, previous generations of cellular networks), use orthogonal multiple access (OMA) methods. Such examples are TDMA (where differed users use differed time slots), FDMA (where differed users use differed frequencies) and CDMA (where differed users use differed code domain). Analysis of OMA transmission ([REFERENCES]) shows that while sum capacity can be attained, it is attained at a specific rate configuration that in general does not guarantee equal rate distribution, i.e, in many cases it is far from fairness. Further, the allocation of resources between the different users requires high level of coordination and results with performance loss and increased latency.

% The need to better utilize the resources offered by a system along with a significant increase in the number of users served by a base station node has recently led to a resurgence of interest in non-orthogonal multiple access methods (NOMA) and the study of various associated communication schemes. See, e.g., \cite{} and references therein.

The contribution of the present work is two-fold:
\begin{enumerate}
    \item Establishing bounds on the gap between the symmetric capacity and sum capacity for the Rayleigh-fading MAC.
    \item Proposing a practical scheme that is able to approach the symmetric capacity with the minimal possible degree of coordination. i.e., specification of the common per-user transmission rate.
\end{enumerate}

% It was shown ([REFERENCES]) that NOMA can achieve any point in the capacity region (and as a result it can achieve the sum capacity or the symmetric capacity), however, in all the suggested methods some level of coordination is required (either define the rate per user, define the transmission order or apply time sharing).

These two points have immediate practical implications. Specifically, we are able to characterize the performance of a protocol 
where all users transmit at a rate just below the symmetric capacity (per user) of the channel. The underlying assumption is that the latter rate is dictated to the users by the base station, utilizing a minimal amount of feedback (which does not scale with the number of users).

%in which all users transmit simultaneously at the same rate, 
%a scheme requiring very low %coordination. Therefore, we consider the performance of a simple protocol, where all users transmit at a rate just below the symmetric capacity (per user) of the channel. The underlying assumption is that the latter rate is dictated to the users by the base station, utilizing a minimal amount of feedback (which does not scale with number of users).

% when the the dominant face of the MAC capacity
% region contains the symmetric-rate point, time sharing between users is needed which requires, again, coordination.

% The performance of a simple protocol is considered, where all users transmit at a rate just below the symmetric capacity (per user) of the channel. The underlying assumption is that the latter rate is dictated to the users by the base station, utilizing a minimal amount of feedback.

% The protocol we analyze is a NOMA method which requires {\emph no coordination} (while still requiring minimal feedback from the receiver). 

%The analysis provides a non-asymptotic counterpart to the diversity-multiplexing tradeoff of the MAC and will serve to obtain tight bounds on the average throughput of the described protocol. 
%It is shown that in this case the chance of getting fairness for ``free'', i.e., the chance that the symmetric capacity and sum capacity coincide is not zero. 

Our first result is an exact characterization of the performance of the suggested communication protocol, when assuming an optimal (maximum-likelihood) receiver, for the two-user case where all nodes are equipped with a single antenna. We then extend the analysis to the scenario of an $N$-user Rayleigh-fading MAC where all nodes are equipped with a single antenna. 
For this scenario, we provide inner and outer bounds on  performance. We then further extend the analysis to a general symmetric i.i.d. Rayleigh multiple-input multiple-output (MIMO) MAC.
%and provide an achievable %bound on the performance of the protocol. 
% Since a closed-form expression can be derived also for the 2-user case where the receiver has two antennas (and each user has a single antenna) we provide the explicit characterization for this case as well. 

The derived tight characterization of the distribution of the symmetric capacity can serve as a basis for deriving other figures of merit (such as the ergodic capacity or the outage probability for any target rate). It is worthwhile noting that although the problem studied is by now quite classical,
and the derivation relies only on elementary techniques,
the obtained results--given in the from of simple closed-form expressions--appear to have eluded previous studies. 

% characterization of the performance of the suggested communication protocol is achieved using only elementary techniques and as is shown, is given by a simple closed form expression, and it appears that results have not appeared in the literature.  

Since the complexity of maximum-likelihood (ML) receiver is prohibitive, we also consider the performance attained by a practical integer-forcing (IF) receiver, demonstrating that it performs quite well in the considered scenario. Interestingly, we observe that  in order to approach the symmetric capacity with an IF receiver, another lesson from the MIMO-MAC DMT analysis should be followed. Specifically, it is necessary to apply ``space-time" precoding at the transmitters (see, e.g., \cite{2011:Hollanti_DMT_optimal_codes}).
% In order to achieve the DMT, special class of space-time precoding needs to be used. Nevertheless, we further demonstrate that even random space-time precoding can provide significant performance gain when combined with integer-forcing. 

The rest of this paper is organized as follows. Section~\ref{sec:ProblemForm} provides the problem formulation. Section \ref{sec:DMT} recounts the DMT of the i.i.d. Rayleigh-fading MIMO-MAC. As mentioned above, this asymptotic analysis provides  intuition and tools that are subsequently refined to a full characterization of the considered communication protocol. In Section~\ref{sec:sigleAnt}, the performance of the protocol is analyzed for the  case where all terminals are equipped with a single antenna. In Section~\ref{sec:Nr_Nt_MIMO_MAC}, bounds are derived for the general case of $N$ users, where each user has $N_t$ antennas and the receiver is equipped with $N_r$ antennas.
% For the two-user case, a full characterization is given while for the $N$-user case lower and upper bounds are provided. In Section \ref{sec:Nr_Nt_MIMO_MAC} an achievable bound for a general $N_r\times_N_t$ i.i.d. Rayleigh MIMO channel is derived. 
%In Section~\ref{sec:2x2_case},  a full characterization of performance is given also 
%given for the special case of a two-user i.i.d. Rayleigh-fading MIMO where each user has a single transmit antennas and the receiver is equipped with two antennas antennas, 
In Section~\ref{sec:IF}, it is demonstrated that an IF receiver combined with (structured or random) space-time precoding yields performance that is close to the established theoretical limits of the proposed communication protocol. Finally, Section~\ref{sec:conc} concludes the paper.

\section{Problem Formulation and Preliminaries}
\label{sec:ProblemForm}
% The lower bound derived in the previous section can be easily shown to cover also the case of a $2\times \Nt$ channel where $\Nt\geq2$. In this section we  adapt the bound for the case of a $1\times \Nt$ system where in this case we are interested in a MAC scenario, that is the encoders correspond to different (and distributed) users. More specifically, we analyze the ML performance of a Rayleigh fading MAC where all terminals are equipped with a single antenna and where we consider a simple transmission protocol
% as described below.

%, the receiver is also equipped with a single antenna, and a simple %transmission scheme is required such that all users transmit at the same %rate.

To simplify derivations, we will assume throughout that all users are equipped with the same number of transmit antennas. The results can  easily be extended to a more general scenario.
%where each user has $N_{t,i}$ antennas.

Accordingly, we consider a MIMO-MAC with $N$ users, where each transmitter has $N_t$ antennas and the receiver is equipped with $N_r$ antennas. The channel model can be expressed as
\begin{align}
    \svv{y}=\sum_{i=1}^{N}\svv{H}_i\svv{x}_i+\svv{n}
    \label{eq:MIMO_MAC}
\end{align}
where $\svv{H}_i$ is the channel matrix between user $i$ and the receiver. We assume an i.i.d. Rayleigh-fading model so that $\svv{H}_{i}\sim\mathcal{CN}(0,\SNR \cdot \svv{I}_{N_r})$  and $\svv{n}\sim\mathcal{CN}(0,\svv{I}_{N_r})$, where there is no statistical dependence over space nor time.\footnote{The time index remains implicit since it plays no role in the analysis. Of course, coding over large blocklength is needed to approach the information-theoretic limits.} We assume that the transmitted data $\svv{x}_i\in\mathbb{C}^{N_t\times1}$ is isotropic (``white'') for each user and that all users are subject to the same power constraint $P$
%=\svv{I}_{N_t}$, 
where the SNR is absorbed in the channel gains.

Define a subset of users by $\mathcal{S} \subseteq \{1, 2,\ldots, N\}$. Then,  the capacity region of the channel is given by (see, e.g., \cite{CoverBook}) all rate vectors $(R_1, \ldots, R_{N})$ satisfying
\begin{align}
\sum_{i\in \mathcal{S}} R_i & \leq C(\mathcal{S}) \nonumber \\
& \triangleq \log\det\left(\svv{I}_{N_r}+\sum_{i\in \mathcal{S}}\svv{H}_i\svv{H}^H_i\right),
\label{eq:capacity_region_MAC}
\end{align}
for all subsets $\mathcal{S}$ in the power set
of $\{1, 2,\ldots, N\}$. The sum capacity is given by 
\begin{align}
C& \triangleq C(\{1, 2,\ldots, N\}) \\
& = \log\det\left(\svv{I}_{N_r}+\sum_{i=1}^N\svv{H}_i\svv{H}^H_i\right).
\label{eq:sum_capacity}
\end{align}

If we impose the constraint that all users transmit at the same rate, then the maximal achievable rate is given by substituting $R_i=C_{\Sigma-\rm sym}/N$ in \eqref{eq:capacity_region_MAC}, from which it follows that the symmetric capacity $C_{\Sigma-\rm sym}$ is dictated by the bottleneck:
\begin{align}
C_{\Sigma-\rm sym}
%&=\min_{S\subseteq\{1, 2,\ldots,\Nt\}}\frac{\Nt}{|S|}\log\det\left(\svv{I}+\PowerMat\sum_{i\in S}\svv{H}_i^H\svv{H}_i\right)
%\label{eq:sym_capacity_MAC_ver1}
%\\
&=\min_{\mathcal{S}\subseteq\{1, 2,\ldots,\Nt\}}\frac{\Nt}{|\mathcal{S}|}\log\det\left(\svv{I}+\sum_{i\in \mathcal{S}}\svv{H}_i\svv{H}_i^H\right).
\label{eq:sym_capacity_MAC_ver2}
\end{align}

% Note that \eqref{eq:sym_capacity_MAC} is a special case of (\ref{eq:R_ML_spaceOnly}).

%In a MAC channel, the  encoders are distributed and hence applying %precoding over the antennas is precluded.

% While in general applying a CUE precoding transformation ${\svv{P}}_c$ implies joint processing at the encoders, which is precluded in a MAC setting, in an i.i.d. Rayleigh fading environment, this random transformation is actually performed by nature.\footnote{This follows since the left and right singular vector matrices of the an i.i.d. Gaussian matrix $\svv{H}_c$ are equal to the eigenvector matrices of the  Wishart ensembles $\svv{H}_c\svv{H}_c^{H}$ and $\svv{H}^H_c\svv{H}_c$, respectively. The latter are known to be CUE (Haar) distributed. See, e.g., Chapter~4.6 in \cite{edelman2005random}.} Hence
% the results developed in the previous sections readily apply to this scenario.

We  study the conditional ``cumulative distribution function'':\footnote{We use quotation marks since we impose strict inequality in $C_{\Sigma-\rm sym}<R$.} \begin{align}
    \prob(C_{\Sigma-\rm sym}<R| C).
    \label{eq:cum}
\end{align}
%for an i.i.d. %Rayleigh fading %model. 
The latter quantity provides a full statistical
characterization of the performance of the transmission protocol
considered.
%where all users transmit at a rate %just below the equal-rate capacity %(per user) of the channel, where the %underlying assumption is that this %rate is dictated to the users by the %base station.
Another interpretation of \eqref{eq:cum} is as a conditional outage probability in an open-loop scenario; that is, in a scenario where
all users (when they are active) transmit at a common target rate $R$. For a given number of active users $N$,  the outage probability is then given by $\mathbb{E}[ \prob(C_{\Sigma-\rm sym}<N \cdot R| C)]$ where the expectation is over $C$ and is computed w.r.t. an i.i.d. Rayleigh distribution.

\section{Lessons from the DMT}
\label{sec:DMT}
Some insight into the performance of the considered protocol may be obtained by considering the %diversity-multipl%exing tradeoff %(DMT) of the 
DMT of the symmetric Rayleigh-fading MIMO-MAC channel, which was studied in \cite{2004:DMT_MAC}. As a special case, the scenario where all users transmit at the same rate was considered in detail, for which  a simple expression for the DMT was derived.

Specifically, the DMT of the Rayleigh MIMO-MAC with $N$ users, where each transmitter has $N_t$ antennas and the receiver has $N_r$  antennas, and where all users transmit at the same rate, is given by
\begin{align}
    d_{\rm sym}^*(r)=\begin{cases}
    d^*_{N_t,N_r}(r),~~~~~~~~r\leq\min(N_t,\frac{N_r}{N+1}) \\
    d^*_{N\cdot N_t,N_r}(N\cdot r),~r\geq\min(N_t,\frac{N_r}{N+1})
    \end{cases}
    \label{eq:DMT_symCap}
\end{align}
where $d^*_{N_t,N_r}(r)$ 
is the DMT of the i.i.d. single-user Rayleigh-fading MIMO channel with $N_t$ transmit antennas and $N_r$ receive antennas (provided that the block length $l\geq N_t+N_r+1$); see, e.g., \cite{tse}).
The function $d^*_{N_t,N_r}(r)$ is a piecewise linear curve such that $d^*_{N_t,N_r}(r)=(N_t-r)(N_r-r)$ for every integer \mbox{$r\leq\min(N_t,N_r)$}. 
% An example of the DMT for the i.i.d. Rayleigh-fading two-user  MAC where each user has a single antenna ($N_t=1$) and the receiver has two antennas ($N_r=2)$ is depicted in~Figure~\ref{fig:DMTof2users}.

% \begin{figure}[htbp]
% \begin{center}
% \includegraphics[width=\figSize\columnwidth]{DMT_MAC_2x1_2_users_new.eps}
% \end{center}
% \caption{DMT curve for a two-user Rayleigh-fading MIMO-MAC where each user has a single  antenna and the receiver is equipped with two antennas.}
% \label{fig:DMTof2users}
% \end{figure}

%
%, characterizing the high SNR %behavior of the channel.

Although the DMT analysis is asymptotic in nature, instructive lessons may nonetheless be drawn from it.
First, it is clear that in the limit of high SNR, the ratio of the symmetric capacity and  sum capacity approaches one in probability (since the DMT is strictly positive for any multiplexing gain smaller than the maximal attainable degrees of freedom).

%can be achieved by the symmetric-rate capacity (hence the symmetric-rate %capacity converges to the sum capacity at asymptotically high SNR).
More importantly, the analysis of the typical error events in the Rayleigh-fading MAC (with equal-rate transmission) reveals that with high probability, outage occurs either as if all users were considered as a single one (``antenna pooling'') or as a result of a single-user constraint constituting the bottleneck \cite{2004:DMT_MAC}. These two regimes are reflected in the two cases appearing in \eqref{eq:DMT_symCap}.

Further, it can be easily shown that for a scalar MAC ($N_r=N_t=1$) with two or more users, the antenna polling bottleneck amounts to the probability that the sum-capacity is below the target rate. As for a (symmetric) transmission protocol where the target rate is set to just below  the sum capacity,  the latter type of outage event cannot occur, it follows that the diversity gain at the maximal multiplexing gain (the maximal attainable degrees of freedom) is \emph{strictly positive}. This in turn implies that the ratio between the symmetric capacity and the sum capacity will approach $1$ quite fast as the SNR grows. 

In fact, in the case of two users, the DMT of which is depicted in Figure~\ref{fig:DMTof3users}, we show that perfect fairness may be gained ``for free'' with high probability. This is, the probability that the symmetric capacity is equal to the sum capacity approaches $1$ rather fast as a function of the SNR;
% The analysis to follow, characterizing the distribution of the rate attained by the considered transmission protocol, reveals that the probability of the symmetric rate achieving the sum-capacity is strictly positive even for practical values of SNR; 
hence, validating the intuition gained from the DMT.

\begin{figure}[htbp]
\begin{center}
\includegraphics[width=\figSize\columnwidth]{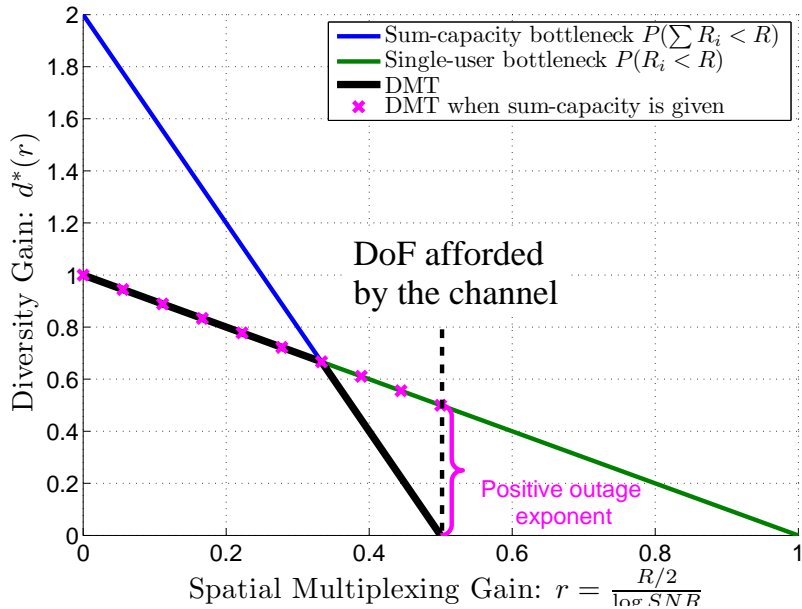}
\end{center}
\caption{DMT curve for a two-user scalar Rayleigh-fading MIMO-MAC where all terminals are equipped with a single antenna.
%each user has a single transmit antenna and the receiver is also equipped with a single antenna.
}
\label{fig:DMTof3users}
\end{figure}

\section{I.I.D. Rayleigh-fading MAC with Single-Antenna Terminals}
\label{sec:sigleAnt}

When all terminals are equipped with  a single antenna, the Rayleigh-fading MAC is described by
\begin{align}
    y=\sum_{i=1}^{\Nt}h_i x_i+n
\end{align}
% where ${h}_{i}\sim    \mathcal{CN}(0,\SNR)$  and $n\sim\mathcal{CN}(0,1)$, and where there is no statistical dependence between any of these random variables. Without loss of generality we assume throughout the analysis to follow that the signal $x_i$ of each user satisfies a  power constraint of $1$, i.e., the SNR is absorbed in the channel gains.
% The capacity region of the channel is given by (see, e.g., \cite{CoverBook}) all rate vectors
% $(R_1, \ldots, R_{N})$ satisfying the constraints
% \begin{align}
% \sum_{i\in S} R_i < \log\left(1+\sum_{i\in S}|{ h}_i|^2\right),
% %\label{eq:capacity_region_MAC}
% \end{align}
% for all ${S\subseteq\{1, 2,\ldots,\Nt\}}$.
% We denote the sum capacity by
% \begin{align}
% C=\log\left(1+\sum_{i=1}^{N}|{ h}_i|^2\right).
% \label{eq:sum_capacity_MAC}
% \end{align}
and the symmetric capacity is given by
\begin{align}
C_{\Sigma-\rm sym}=\min_{\mathcal{S}\subseteq\{1, 2,\ldots,\Nt\}}\frac{\Nt}{|\mathcal{S}|}\log\left(1+\sum_{i\in \mathcal{S}}|{ h}_i|^2\right).
\label{eq:sym_capacity_MAC}
\end{align}
\subsection{Two-user i.i.d. single antenna Rayleigh-fading MAC }
We begin by analyzing the simplest case of a two-user scalar MAC, for which we obtain an exact of  characterization of  \eqref{eq:cum}.
%$\Nt=2$.
\begin{theorem}
For a two-user i.i.d. Rayleigh-fading MAC with sum capacity $C$, for any rate $R\leq C$,
    \begin{align}
        \prob(C_{\Sigma-\rm sym}<R| C) = 2\cdot\frac{2^{R/2}-1}{2^{C}-1}.
    \end{align}
    \label{thm:thm1}
\end{theorem}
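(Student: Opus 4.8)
The plan is to reduce the statement to an elementary geometric-probability computation after conditioning. First I would set $g_i=|h_i|^2$; since $h_i\sim\mathcal{CN}(0,\SNR)$, the $g_i$ are i.i.d.\ exponential random variables. For the two-user scalar MAC the minimization in \eqref{eq:sym_capacity_MAC} ranges over $\mathcal{S}\in\{\{1\},\{2\},\{1,2\}\}$, so that
\begin{align}
C_{\Sigma-\rm sym}=\min\left\{2\log(1+g_1),\,2\log(1+g_2),\,\log(1+g_1+g_2)\right\},
\end{align}
and the last term is precisely the sum capacity $C$.

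Since we condition on $C$ and assume $R\leq C$, the sum constraint $\log(1+g_1+g_2)=C\geq R$ is never the bottleneck that triggers the event $\{C_{\Sigma-\rm sym}<R\}$. Hence, writing $t=2^{R/2}-1$, the event simplifies to $\{C_{\Sigma-\rm sym}<R\}=\{g_1<t\}\cup\{g_2<t\}$. Conditioning on $C$ is equivalent to fixing the sum $g_1+g_2=s$ with $s=2^{C}-1$. The key observation I would invoke is the standard fact that, conditioned on the sum $g_1+g_2=s$ of two i.i.d.\ exponentials, the marginal of $g_1$ is uniform on $[0,s]$ (the joint density is constant along the line $g_1+g_2=s$). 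Consequently $\prob(g_1<t\mid C)=\prob(g_2<t\mid C)=t/s$, and it remains only to evaluate the probability of the union.

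The main obstacle --- and the only place where the hypothesis $R\leq C$ is genuinely used --- is showing that the two single-user events are disjoint, so that inclusion--exclusion contributes no correction and the union probability is exactly $2t/s=2(2^{R/2}-1)/(2^{C}-1)$. The events $\{g_1<t\}$ and $\{g_2<t\}=\{g_1>s-t\}$ overlap iff $s<2t$; substituting the definitions, disjointness amounts to $2^{R/2+1}\leq 2^{C}+1$. Using $R\leq C$ and writing $u=2^{C/2}$, this reduces to $2u\leq u^2+1$, i.e.\ the elementary inequality $(u-1)^2\geq 0$, which always holds. I would also note in passing that $t\leq s$ (equivalently $2^{R/2}\leq 2^{C}$), so that the two conditional probabilities $t/s$ are genuine probabilities; this again follows from $R\leq C$. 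Assembling these pieces yields the claimed closed form.
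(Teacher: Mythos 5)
Your proof is correct and follows essentially the same route as the paper's: both reduce the event to the union of the two single-user constraints, observe that conditioned on $C$ the normalized gain $g_1/(g_1+g_2)$ is uniform on $[0,1]$ (the paper gets this by viewing $\mathbf{h}/\|\mathbf{h}\|$ as the first row of a Haar unitary matrix and citing $|\mathbf{U}_{1,1}|^2\sim\mathrm{Unif}([0,1])$, while you derive the same fact elementarily from the constancy of the joint exponential density on the line $g_1+g_2=s$), and conclude that the two events correspond to disjoint intervals of equal length $(2^{R/2}-1)/(2^{C}-1)$, giving the factor $2$. Your only substantive addition is the explicit verification, via $(u-1)^2\geq 0$ with $u=2^{C/2}$, of the disjointness for $R\leq C$ that the paper merely asserts --- a worthwhile detail, but not a different proof.
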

\begin{proof}
Given $C$, $\svv{h} \triangleq (h_1,h_2)$ is uniformly distributed over a two-dimensional  complex sphere of radius
$\sqrt{2^ {C}-1}$. Hence, $\svv{h}/ \|\svv{h} \|$ can be viewed as the first row of a random (Haar) unitary matrix $\svv{U}$.

%We follow the footsteps of the proof of Theorem~\ref{thm:thm1} while noting that in the case of $1\times 2 $ MAC, we have $\rho_1=2^C-1$ and $\rho_2=0$.
% By \eqref{eq:sym_capacity_MAC} and using the notation of
% \eqref{eq:joint,st}, we obtain (cf. \eqref{eq:Rj_k123})
By \eqref{eq:sym_capacity_MAC}, we obtain
\begin{align}
    C_{\Sigma-\rm sym}=\min\left\{2C(\{1\}),2C(\{2\}),C_{}\right\}.
\end{align}
We start by analyzing $2C(\{1\})$, which is given by
\begin{align}
2C(\{1\})&=2\log\left(1+|h_1|^2\right) \nonumber \\
& = 2\log\left(1+|\svv{U}_{1,1}|^2(2^{C}-1)\right).
\end{align}
% \begin{align}
%      C(\{1\})&=2\log\left(1+\begin{bmatrix}\svv{U}_{1,1}\\\svv{U}_{1,2}\end{bmatrix}^H\begin{bmatrix}
%  2^C-1 & {0} \\ {0} & {0}
%  \end{bmatrix}\begin{bmatrix}\svv{U}_{1,1}\\\svv{U}_{1,2}\end{bmatrix}\right) \nonumber \\
%  &=2\log\left(1+|\svv{U}_{1,1}|^2(2^C-1)\right).
% \end{align}
It follows that
\begin{align}
    \prob(2C(\{1\})<R|C) &=\prob\left(|\svv{U}_{1,1}|^2<\frac{2^{R/2}-1}{2^C-1}\right)
\nonumber \\ &= \prob\left(|\svv{U}_{1,1}|^2 \in \left[0,\frac{2^{R/2}-1}{2^C-1}\right)\right)
\label{eq:10}
\end{align}
Since (see, e.g., \cite{NarulaTrottWornell:1999}) for a  $2\times2$  matrix drawn uniformly with respect to the Haar measure, we have \mbox{$|\svv{U}_{1,1}|^2\sim \rm{Unif}([0,1])$}, it follows that 
\begin{align}
    \prob(2C(\{1\})<R|C)=\frac{2^{R/2}-1}{2^C-1}.
\end{align}

Now, since $\svv{U}_{1,1}$ and $\svv{U}_{1,2}$ are the elements of a row in a unitary matrix, we have
\begin{align}
    |\svv{U}_{1,1}|^2+|\svv{U}_{1,2}|^2=1.
\end{align}
Hence,
\begin{align}
    \prob(2C(\{2\})<R|C) &= \prob\left(|\svv{U}_{1,2}|^2<\frac{2^{R/2}-1}{2^C-1}\right) \nonumber \\
    & = \prob\left(1-|\svv{U}_{1,1}|^2<\frac{2^{R/2}-1}{2^C-1}\right) \nonumber \\
    & = \prob\left(|\svv{U}_{1,1}|^2 \in \left(1-\frac{2^{R/2}-1}{2^C-1} ,1\right]\right)
   %  & = \prob\left(|\svv{U}_{1,1}|^2>\frac{2^C-2^{R/2}}{2^C-1}\right)
%      \nonumber \\
%     & =1-\frac{2^C-2^{R/2}}{2^C-1} \nonumber \\
%     & = \frac{2^{R/2}-1}{2^C-1}.
\label{eq:13}
\end{align}

Since for any rate $R\leq C$, the intervals appearing in \eqref{eq:10} and \eqref{eq:13} are disjoint and of the same length, it follows that
% As shown in the Appendix, the events $\{C(\{1\})<R\}$
% and $\{C(\{2\})<R\}$ are disjoint. Since by symmetry we
% further have that
% \begin{align}
%     \prob(C(\{1\})<R|C)=\prob(C(\{2\})<R|C),
% \end{align}
% it follows that
%. Since we are interested in $0\leq R \leq C$ we conclude that
\begin{align}
    \prob(C_{\Sigma-\rm sym}<R| C) = 2\cdot\frac{2^{R/2}-1}{2^{C}-1}.
    \label{eq:nonzeroprob}
\end{align}
\end{proof}

We note that the probability in \eqref{eq:nonzeroprob}
is strictly smaller than $1$ at $R=C$. Thus, the probability that the symmetric  capacity
coincides with the sum capacity is strictly positive.
%and approaches $1$ as the capacity %grows.

Figure~\ref{fig:capRegion2users} depicts the capacity region for three different channel realizations for  which the sum capacity equals $2$. The probability that the symmetric  capacity
coincides with the sum capacity amounts to the probability that the symmetric rate line
passes through the dominant face of the capacity region and is given by
\begin{align}
    \prob\left({C_{\Sigma-\rm sym}=C|C}\right)&=1-\prob\left({C_{\Sigma-\rm sym}<C|C}\right) \nonumber \\
    & = 1 - 2\cdot\frac{2^{C/2}-1}{2^{C}-1}.
    \label{eq:green}
\end{align}
%The probability that the capacity %region is of the type of the dashed
%line (equal-rate line passing through %the dominant face of the region)
%is given by \eqref{eq:green}
%The equal rate capacity is highlighted as well and we can see that there %are cases where the equal rate capacity coincide with the sum capacity.
As an example, for $C=2$, this probability is $1/3$.
% \begin{align}
%     \prob\left({C_{\Sigma-\rm sym}=C|C=2}\right)
%     %&=1-\prob\left({C_{\Sigma-\rm sym}<C|C}\right) \nonumber \\
%     & = 1 - 2\cdot\frac{2^1-1}{2^2-1} \nonumber \\
%     & = 1 - 2\cdot\frac{1}{3} \nonumber \\
%     &=\frac{1}{3}.
%     \label{eq:eq}
% \end{align}

Figure~\ref{fig:pdf_c2} depicts
the probability density function of the symmetric  capacity
of a two-user i.i.d. Rayleigh-fading MAC given that the sum capacity is $C=2$. The probability in \eqref{eq:green} manifests itself as a delta function at the sum capacity.

% Finally,
% Figure~\ref{fig:prob_Csym_Eq_Csum} depicts the probability $\prob\left({C_{\Sigma-\rm sym}=C|C}\right)$ as a function of the sum capacity. Note that the latter probability tends to one exponentially
% fast in $C$.

% \begin{figure}
% \begin{center}
% \includegraphics[width=1\columnwidth]{figures/prob_Csym_Eq_Csum.eps}
% \end{center}
% \caption{Probability that the symmetric-rate capacity
% coincides with the sum capacity. }
% \label{fig:prob_Csym_Eq_Csum}
% \end{figure}

% by Theorem~\ref{thm:thm1} the probability that
%  the symmetric-rate equals
% achieving the white-input mutual information $C$ when using symmetric rate transmission scheme is not zero. The while-input mutual information can be viewed as the ``sum capacity'' since it suggests that the bottleneck in the capacity region  \eqref{eq:capacity_region_MAC} in the case where $\svv{S}=\{1,2,\ldots,\Nt\}$.

\begin{figure}[htbp]
\begin{center}
\includegraphics[width=\figSize\columnwidth]{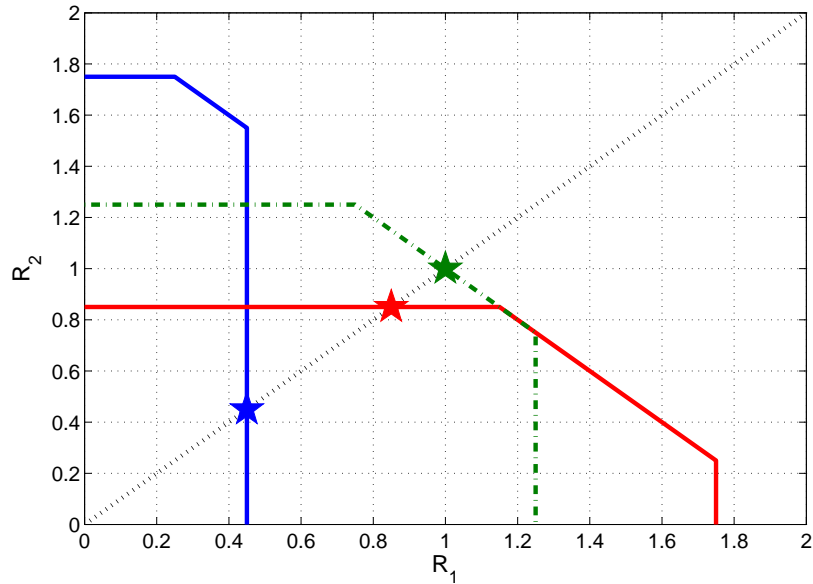}
\end{center}
\caption{Different capacity regions corresponding to a two-user  MAC with sum capacity $C=2$. For the channel depicted with a dashed-dotted line, the dominant face constitutes the bottleneck and $C_{\Sigma-\rm sym}=C$. }
\label{fig:capRegion2users}
\end{figure}

\begin{figure}[htbp]
\begin{center}
\includegraphics[width=\figSize\columnwidth]{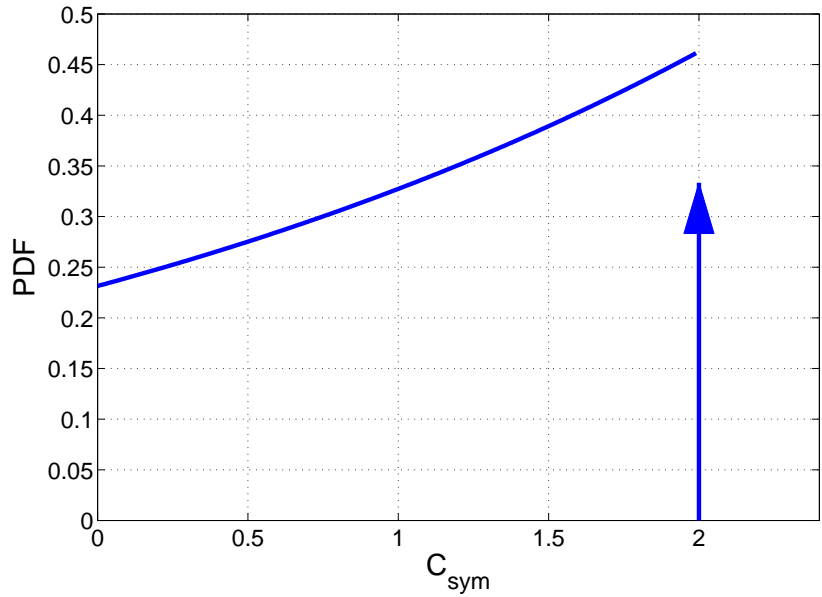}
\end{center}
\caption{Probability density function of the symmetric capacity
of a two-user i.i.d. Rayleigh-fading MAC given that the sum capacity is $C=2$.}
\label{fig:pdf_c2}
\end{figure}

\subsection{Extension to the $\Nt$-user i.i.d. scalar  Rayleigh-fading MAC}
Theorem~\ref{thm:thm1} may be extended to the case of $\Nt>2$ users. However,
rather than obtaining an exact characterization of the distribution of the symmetric  capacity, we will now be content with deriving lower and upper bounds for it.

Let us define
\begin{align}
    P_{\rm out}(k,R_{}|C)\triangleq \prob\left(\frac{\Nt}{k}C(\mathcal{S})<R_{} \Big|C_{}\right) 
    \label{def:pkout}
\end{align}
We begin with the following lemma which is the key technical step from which Theorem~\ref{thm:thm2} follows.
\begin{lemma}
For an $N$-user i.i.d. Rayleigh-fading  MAC with sum capacity $C$,
%the outage probability
and for any  subset of users $\mathcal{S}\subseteq\{1, 2,\ldots,\Nt\}$ with cardinality $k$, we have
 \begin{align*}
    P_{\rm out}(k,R_{}|C)
     &= \frac{\mathcal{B}(\frac{2^{R_{}|\mathcal{S}|/\Nt}-1}{2^{C_{}}-1};|\mathcal{S}|,\Nt-|\mathcal{S}|)}{\mathcal{B}(1;|\mathcal{S}|,\Nt-|\mathcal{S}|)}\
%\label{eq:pkout}
 \end{align*}
 where $0\leq R_{} \leq C_{}$ and
 \begin{align*}
     \mathcal{B}(x;a,b)=\int_0^{x}u^{a-1}(1-u)^{b-1}du
  %   \label{eq:incompleteBetaFunc}
 \end{align*} is the incomplete beta function.
 \label{lem:lem1}
\end{lemma}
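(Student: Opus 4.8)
The plan is to mirror the proof of Theorem~\ref{thm:thm1}, reducing the outage event to a statement about a single Beta-distributed random variable; the $N=2$, $k=1$ case there produced a $\mathrm{Unif}([0,1])=\mathrm{Beta}(1,1)$ variable, and the general case will produce a $\mathrm{Beta}(|\mathcal{S}|,N-|\mathcal{S}|)$ variable. First I would rewrite the defining event in \eqref{def:pkout}. Since $C(\mathcal{S})=\log(1+\sum_{i\in\mathcal{S}}|h_i|^2)$, the inequality $\frac{N}{k}C(\mathcal{S})<R$ is equivalent to $\sum_{i\in\mathcal{S}}|h_i|^2<2^{kR/N}-1=2^{R|\mathcal{S}|/N}-1$. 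Conditioning on $C$ fixes the total energy $\sum_{i=1}^{N}|h_i|^2=2^{C}-1$, so after dividing by $2^{C}-1$ the event becomes $W<(2^{R|\mathcal{S}|/N}-1)/(2^{C}-1)$, where $W\triangleq\sum_{i\in\mathcal{S}}|h_i|^2/(2^{C}-1)$ is the normalized partial energy.

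Next I would identify the conditional law of $W$. Exactly as in Theorem~\ref{thm:thm1}, given $C$ the vector $\svv{h}=(h_1,\ldots,h_N)$ is uniformly distributed on the complex sphere of radius $\sqrt{2^{C}-1}$, so $\svv{h}/\|\svv{h}\|$ is a uniformly (Haar) random unit vector in $\mathbb{C}^{N}$ and $W=\sum_{i\in\mathcal{S}}|\svv{U}_{1,i}|^2$ is the squared norm of the $k=|\mathcal{S}|$ coordinates of this unit vector indexed by $\mathcal{S}$.

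The key step is to establish that $W\sim\mathrm{Beta}(|\mathcal{S}|,N-|\mathcal{S}|)$. I would argue this via the standard Gaussian representation of a uniform unit vector: writing it as $\svv{g}/\|\svv{g}\|$ with $\svv{g}=(g_1,\ldots,g_N)$ i.i.d. standard complex Gaussian, the quantities $|g_i|^2$ are i.i.d. exponential, so $\sum_{i\in\mathcal{S}}|g_i|^2\sim\mathrm{Gamma}(k,1)$ and the complementary sum is an independent $\mathrm{Gamma}(N-k,1)$. Hence the ratio $W=\sum_{i\in\mathcal{S}}|g_i|^2/\sum_{j=1}^{N}|g_j|^2$ is $\mathrm{Beta}(k,N-k)$ by the usual Gamma-to-Beta construction. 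Equivalently, conditioned on their sum the i.i.d. exponentials $|h_i|^2$ are uniform on the simplex, i.e.\ $\mathrm{Dirichlet}(1,\ldots,1)$, and aggregating $k$ of the coordinates gives $\mathrm{Beta}(k,N-k)$; the instance $N=2$, $k=1$ recovers the $\mathrm{Unif}([0,1])$ fact invoked in Theorem~\ref{thm:thm1}.

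Finally, the claim follows by evaluating the Beta CDF. With $x=(2^{R|\mathcal{S}|/N}-1)/(2^{C}-1)$, which lies in $[0,1]$ whenever $0\le R\le C$ so that the expression is well defined, the density of $\mathrm{Beta}(|\mathcal{S}|,N-|\mathcal{S}|)$ is proportional to $u^{|\mathcal{S}|-1}(1-u)^{N-|\mathcal{S}|-1}$, whence $P_{\rm out}(k,R|C)=\prob(W<x)=\mathcal{B}(x;|\mathcal{S}|,N-|\mathcal{S}|)/\mathcal{B}(1;|\mathcal{S}|,N-|\mathcal{S}|)$, as stated. I expect the main obstacle to be cleanly justifying the $\mathrm{Beta}$ law of the partial squared norm (the distributional heart of the lemma); once that is in place, the remainder is a direct substitution of the energy constraint.
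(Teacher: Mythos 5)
Your proposal is correct and follows essentially the same route as the paper's proof: conditioning on $C$ to place $\svv{h}$ uniformly on the complex sphere of radius $\sqrt{2^C-1}$, reducing the outage event to the partial squared norm $W=\sum_{i\in\mathcal{S}}|U_{1,i}|^2$ of a Haar-distributed unit vector, and evaluating its $\mathrm{Beta}(k,N-k)$ CDF as a normalized incomplete beta function. The only divergence is in justifying that distributional fact: you derive the Beta law elementarily via the Gamma-to-Beta construction for i.i.d. complex Gaussians, whereas the paper identifies $(|U_{1,1}|^2,\ldots,|U_{1,N}|^2)$ as Dirichlet and the partial sum as a rank-one Jacobi (MANOVA) distribution by citation to the random-matrix literature --- your self-contained argument is, if anything, more transparent.
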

\begin{proof}
Similar to the case of two users,  $\svv{h}\triangleq(h_1,\ldots,h_{\Nt})$ is uniformly distributed over
an $N$-dimensional complex sphere of radius $\sqrt{2^ {C}-1}$ and hence $\svv{h}/\| \svv{h} \|$ may be viewed as the first row of a unitary matrix $\svv{U}$ drawn at random according to the Haar measure.

By symmetry, for any set $\mathcal{S}$ with cardinally $k$, the distribution of
$ C(\mathcal{S})$ is equal to that of
%(and singe $\svv{U}$ is taken from the CUE, without loss of generality %taking the first $k$ channels) we have
\begin{align}
    C(\{1,2,\ldots,k\})
    &=\log\left(1+\sum_{i=1}^k |h_i|^2\right) \nonumber \\
                  &=\log\left(1+\left(2^C-1\right)\sum_{i=1}^k |U_{1,i}|^2\right).
  \label{eq:58}
\end{align}
% Denoting the size of $S$ as $|S|=k$ we note that taking $k$ columns from the equivalent channel is equivalent to applying precoding matrix composed from $k$ columns taken from the original precoding matrix.
% Denote
% \begin{align}
%     \svv{U}(k)=\begin{bmatrix}
%     U_{1,1} & \cdots & U_{1,k} \\
%     U_{2,1} & \cdots & U_{2,k} \\
%     \vdots & \vdots & \vdots \\
%     U_{\Nt,1} & \cdots & U_{\Nt,k}
%     \end{bmatrix}
% \end{align}

% we have
% \begin{align}
%     &C(\{S\}) \nonumber \\
%     &=\frac{k}{\Nt}\log\det\left(\svv{I}_{k\times k}+\svv{U}(k)^T\begin{bmatrix}
%     2^C-1 & \cdots & 0\\
%     0 & \cdots & 0 \\
%     \vdots & \vdots & \vdots \\
%     0 & \cdots & 0
%     \end{bmatrix}
%     \svv{U}(k)\right) \nonumber \\
%     &=\frac{k}{\Nt}\log\left(1+(2^C-1)\svv{U}(k)_{1}^H\svv{U}(k)_{1}\right),
% \end{align}
%where $\svv{U}(k)_{1}$ is the first row of $\svv{U}(k)$. Following \cite{dar2013jacobi} let $\lambda$ be the eigenvalue of $\svv{U}(k)_{1}^H\svv{U}(k)^H_{1}$. We therefore have
Denoting the partial sum of $k$ entries
as $\displaystyle{X=\sum_{i=1}^k |\svv{U}_{1,i}|^2}$, we therefore have
\begin{align}
   \prob\left(\frac{\Nt}{k}C(\mathcal{S})<R \Big{|}  C\right)&=\prob\left( 1+ \left(2^C-1\right)X <2^{R\frac{k}{\Nt}}\right)
   \nonumber \\
   & = \prob\left(X<\frac{2^{R\frac{k}{\Nt}}-1}{2^C-1}\right).
\end{align}
We note that the vector
$( |\svv{U}_{1,1}|^2,\ldots,|\svv{U}_{1,N}|^2 )$ follows the Dirichlet distribution and a partial sum of its entries has a Jacobi (also referred to as MANOVA) distribution.
To see this, we note that \eqref{eq:58} can be written as
\begin{align}
    \frac{\Nt}{k}C(\{1,2,\ldots,k\})=\frac{\Nt}{k}\log\left(1+(2^C-1)\svv{U}(k)_{1}\svv{U}(k)^H_{1}\right)
 \end{align}
where $\svv{U}(k)_{1}$ is a vector which contains the first $k$ elements of the first row of $\svv{U}$. Noting that since $\svv{U}(k)_{1}$ is a submatrix of a unitary matrix, its singular values follow (see, e.g., \cite{dar2013jacobi}) the Jacobi distribution, and more specifically,  $X$ has Jacobi distribution with rank 1.
% To see this we note that \eqref{eq:58} can be written as
% \begin{align}
%     C(\{1,2,\ldots,k\})=\frac{\Nt}{k}\log\left(1+(2^C-1)\svv{U}^k_{1}^H \svv{U}^k_{1}\right)
%  \end{align}
We thus obtain
%(see \cite{dar2013jacobi}),
\begin{align}
     \prob\left(\frac{\Nt}{k}C(\mathcal{S})<R_{} \Big|C_{}\right) &= \int_{0}^{\frac{2^{R_{}k/\Nt}-1}{2^{C_{}}-1}}x^{k-1}x^{\Nt-k-1}d\lambda \nonumber \\ &=\frac{\mathcal{B}\left(\frac{2^{R_{}k/\Nt}-1}{2^{C_{}}-1};k,\Nt-k\right)}{\mathcal{B}(1;k,\Nt-k)},\nonumber
 \end{align}
where $\mathcal{B}(x;a,b)$ is the incomplete beta function defined. 
%in \eqref{eq:incompleteBetaFunc}.
% \begin{align}
%      \mathcal{B}(x;a,b)=\int_0^{x}u^{a-1}(1-u)^{b-1}du \nonumber
%  \end{align}
%is the incomplete beta function.

% We note that $\svv{U}(k)=\begin{bmatrix} \svv{U}_{1,1} & \cdots & \svv{U}_{1,k}\end{bmatrix}$ is a submatrix of a unitary matrix. Let $\lambda$ be the eigenvalue of $\svv{U}(k)\svv{U}(k)^H$. We therefore have
% \begin{align}
%   \prob(C(\{S\})<R|C)&=\prob(\left(1+(2^C-1)\lambda\right)<2^{R\frac{\Nt}{k}})\nonumber \\
%   & = \prob(\lambda<\frac{2^{R\frac{\Nt}{k}}-1}{2^C-1}).
% \end{align}
% Following \cite{dar2013jacobi} we note that ${\lambda}$ has a (simple) Jacobi distribution. Therfore we get
% \begin{align}
%      \prob\left(C(\{S\})<R_{}|C_{}\right) &= \int_{0}^{\frac{2^{R_{}k/\Nt}-1}{2^{C_{}}-1}}\lambda^{k-1}\lambda^{\Nt-k-1}d\lambda \nonumber \\ &=\frac{\mathcal{B}(\frac{2^{R_{}k/\Nt}-1}{2^{C_{}}-1};k,\Nt-k)}{\mathcal{B}(1;k,\Nt-k)}\nonumber
%  \end{align}
\end{proof}
% A more explicit deviation for this Lemma appear in Appendix~\ref{sec:apcc} (where the usage of Jacobi distribution is explained through matric representation of the MAC).

%An important property of Lemma~\ref{thm:thm2} is that the outage %probability of a specific set $S$,
%As the depends only on the cardinality of the set. Hence,
%$We obtain the following.

\begin{theorem}
For an $N$-user scalar i.i.d. Rayleigh-fading MAC, we have
{
\begin{align}
 \max_k P_{\rm out}(k,R_{}|C) & \leq  \prob\left(C_{\Sigma-\rm sym} <R_{}|C\right) \\
 & \leq  \sum_{k=1}^{\Nt} {{\Nt}\choose{k}} P_{\rm out}(k,R_{}|C), \nonumber
 \end{align}
 }
where $P_{\rm out}(k,R_{}|C)$ is defined in 
\eqref{def:pkout} and given in
Lemma~\ref{lem:lem1}.
%\begin{align}
% P_{\rm out}(k,R_{}|C)\triangleq \prob(C(\{1,2,\ldots,k\})<R_{}|C_{}).
%\end{align}
\label{thm:thm2}
\end{theorem}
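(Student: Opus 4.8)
The plan is to observe that, by the min-formula \eqref{eq:sym_capacity_MAC}, the symmetric-capacity outage event is a finite union of per-subset outage events, so that both bounds reduce to elementary set-monotonicity inequalities combined with the cardinality symmetry already exploited in Lemma~\ref{lem:lem1}.

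First I would rewrite the event of interest. Since $C_{\Sigma-\rm sym}$ is the minimum of $\frac{N}{|\mathcal{S}|}C(\mathcal{S})$ over all nonempty $\mathcal{S}\subseteq\{1,\ldots,N\}$, we have
\begin{align}
\{C_{\Sigma-\rm sym}<R\}=\bigcup_{\emptyset\neq\mathcal{S}\subseteq\{1,\ldots,N\}}\left\{\frac{N}{|\mathcal{S}|}C(\mathcal{S})<R\right\},
\end{align}
because the minimum drops below $R$ precisely when at least one constraint is violated. Conditioning on $C$ throughout and grouping the subsets by cardinality $k\in\{1,\ldots,N\}$, the symmetry argument of Lemma~\ref{lem:lem1} (the $h_i$ are i.i.d., hence their joint law is permutation-invariant) shows that every subset of size $k$ has the same conditional outage probability $P_{\rm out}(k,R|C)$, and there are $\binom{N}{k}$ such subsets.

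For the upper bound I would apply the union bound to this decomposition, obtaining $\prob(C_{\Sigma-\rm sym}<R|C)\leq\sum_{\emptyset\neq\mathcal{S}}P_{\rm out}(|\mathcal{S}|,R|C)=\sum_{k=1}^{N}\binom{N}{k}P_{\rm out}(k,R|C)$. For the lower bound I would use that a union is at least as probable as any one of its constituents, so $\prob(C_{\Sigma-\rm sym}<R|C)\geq P_{\rm out}(k,R|C)$ for each $k$; maximizing over $k$ yields the claim.

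There is no genuinely hard analytic step, as all the distributional work is already contained in Lemma~\ref{lem:lem1}; the present argument only repackages it. The one point deserving care is the justification that the per-subset probability depends on $\mathcal{S}$ only through $|\mathcal{S}|$, which is the permutation invariance noted above. The main obstacle is conceptual rather than technical: one must accept that these are deliberately the crudest two-sided estimates---a single dominant term below and a full union bound above---rather than a sharper inclusion-exclusion, whose cross terms would require the joint law of $C(\mathcal{S})$ over overlapping subsets and are not available in closed form.
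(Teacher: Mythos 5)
Your proposal is correct and takes essentially the same approach as the paper: you express the outage event as a union of per-subset events, apply the union bound together with the cardinality symmetry of Lemma~\ref{lem:lem1} (giving $\sum_{k=1}^{N}\binom{N}{k}P_{\rm out}(k,R|C)$) for the upper bound, and lower-bound the union by a single constituent maximized over $k$. The only cosmetic difference is that the paper reaches the lower bound via the containment $C_{\Sigma-\rm sym}\leq \min_k \frac{N}{k}C(\{1,\ldots,k\})$ over prefix subsets before passing to the maximum, whereas you invoke the exact union decomposition directly---the underlying idea is identical.
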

\begin{proof}
To establish the left hand side of the theorem, first note
that $C_{\Sigma-\rm sym}\leq C(\mathcal{S})$ for any subset $\mathcal{S}$ and hence
%Thus,
%\begin{align}
%    \prob\left(C(\{S\})<R_{}|C_{}\right)=\prob(C(\{1,2,\ldots,k\})<R_{}|C%_{})
%\end{align}
%
%As noted in the proof of Lemma~\ref{lem:lem1},
\begin{align}
C_{\Sigma-\rm sym} \leq \min_k \frac{\Nt}{k} C(\{1,2,\ldots,k\}).
\end{align}
It follows that
\begin{align}
 &\prob\left(C_{\Sigma-\rm sym} <R_{}\Big|C\right)  \nonumber \\
 &\geq \prob\left( \min_k \frac{\Nt}{k}C(\{1,2,\ldots,k\})<R_{} \Big| C\right)   \nonumber \\
 &= \prob\left( \bigcup_k \left\{ \frac{\Nt}{k}C(\{1,2,\ldots,k\})<R_{} \right\}\Big| C\right) \nonumber \\
 &\geq \max_k \prob\left(  \frac{\Nt}{k}C(\{1,2,\ldots,k\})<R_{}  \Big|  C\right) \nonumber \\
 & = \max_k P_{\rm out}(k,R_{}|C).
\end{align}
%Therefore, the left hand side of the theorem
%can be easily p by noting that $C_{\Sigma-\rm sym}$ is acquired by taking the %minimum over all values of $k$. Hence,
%follows by noticing  that
%$C_{\Sigma-\rm sym}\leq C(\{S\})$ for any subset $S$ and hence
%it follows that
%\begin{align}
%     P_{\rm out}(k,R_{}|C)~\leq~\prob\left(C_{\Sigma-\rm sym}<R_{}|C\right).
%\end{align}
%Taking the maximum over $k$ yields the left hand side.
The right hand side follows by the union bound.
%and recalling that subsets with the same cardinality have the same outage probability. %Thus we can upper bound the outage probability by multiplying the outage %probability of a specific cardinality with the number of sets which has %the same cardinality. Hence we get
%\begin{align}
%\prob\left(C_{\Sigma-\rm sym}<R_{}|C\right) & \leq \sum_{S\subseteq\{1, %2,\ldots,\Nt\}}\prob\left(C\{S\}<R_{}|C\right) \nonumber \\
%& = \sum_{k=1}^{\Nt} {{\Nt}\choose{k}} P_{\rm out}(k,R_{}|C)
%\end{align}
\end{proof}

Figures~\ref{fig:4x4_C8_perStream} and~~\ref{fig:lowUpBoundOutageMAC}
illustrate the theorem for the case of four users (where the markers indicate the height of the delta functions).
As can be seen from Figure~\ref{fig:4x4_C8_perStream}, already at not very high values of capacity, the single-user constraints already constitute the bottleneck.
We further observe from Figure~\ref{fig:lowUpBoundOutageMAC}
that the union bound is quite tight.

\begin{figure}[htbp]
\begin{center}
\includegraphics[width=\figSize\columnwidth]{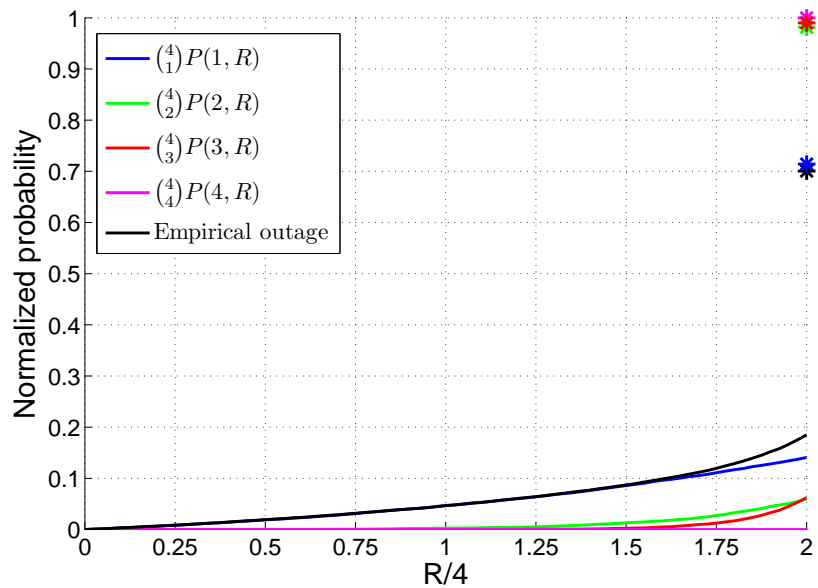}
\end{center}
\caption{Demonstration of the quantities appearing in the bounds appearing in Theorem~\ref{thm:thm2} for the case of a $4$-user i.i.d. Rayleigh-fading  MAC with sum capacity $C=8$.}
\label{fig:4x4_C8_perStream}
\end{figure}

\begin{figure}[htbp]
\begin{center}
\includegraphics[width=\figSize\columnwidth]{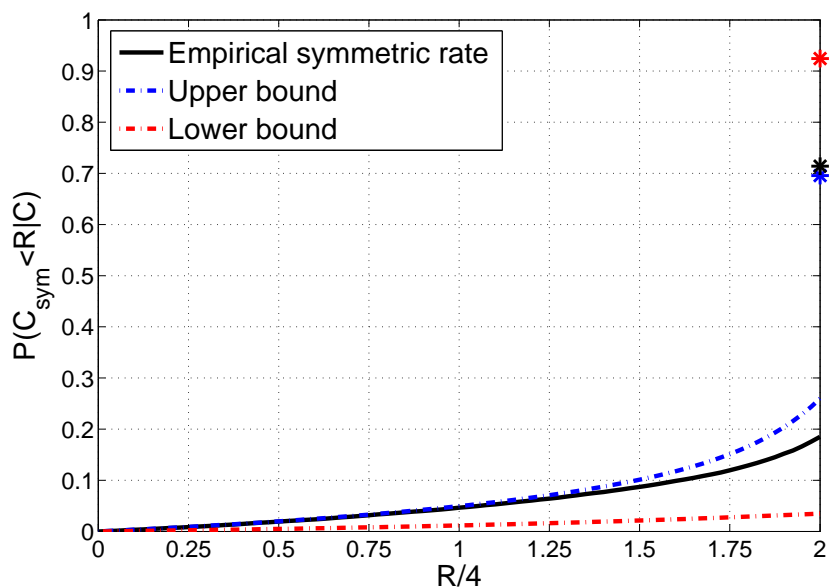}
\end{center}
\caption{Comparison of empirical evaluation of \eqref{eq:sym_capacity_MAC} and Theorem 2 (upper  and lower bounds for the outage probability) for a $4$-user i.i.d. Rayleigh-fading MAC ($N_t=N_r=1$) with sum capacity $C_{}=8$.}
\label{fig:lowUpBoundOutageMAC}
\end{figure}

\section{Upper Bound on the Outage Probability for the symmetric $N$-user  Rayleigh-fading MIMO  MAC}
%^each user with $N_{t}$ antennas and receiver with $N_r$ antennas}
\label{sec:Nr_Nt_MIMO_MAC}

We now consider the symmetric MIMO-MAC scenario where each of the 
%more general case where we have $N$-users, each 
$N$ users is equipped with $N_{t}$ antennas and the receiver is equipped with $N_r$ antennas. In this case, the channel as described by \eqref{eq:MIMO_MAC} can be rewritten as
\begin{align}
    \svv{y}=\mathcal{H}\mathcal{X}+\svv{n}
\end{align}
where 
\begin{align*}
\mathcal{H}=\begin{bmatrix} \svv{H}_1 & \svv{H}_2 & \hdots & \svv{H}_N \end{bmatrix}
%\label{eq:bigH}
\end{align*}
and 
$$\mathcal{X}=\begin{bmatrix}\svv{x}_1^T & \svv{x}_2^T   \hdots  \svv{x}_N^T \end{bmatrix}^T.$$

Therefore, the symmetric capacity \eqref{eq:sym_capacity_MAC_ver2} can be expressed as
\begin{align}
C_{\Sigma-\rm sym}
&=\min_{\mathcal{S}\subseteq\{1, 2,\ldots,\Nt\}}\frac{\Nt}{|\mathcal{S}|}\log\det\left(\svv{I}+\mathcal{H}_\mathcal{S}^H\mathcal{H}_\mathcal{S}\right)
\end{align}
where 
$\mathcal{H}_\mathcal{S}$ is defined as the submatrix of $\mathcal{H}_\mathcal{S}$ generated from taking only the channel matrices $\svv{H}_i$ corresponding to user indices $i$ such that  $i\in \mathcal{S}$.

In order to leverage the bounds derived for the scalar MAC scenario, we may use the simple bound 
% To derive a bound which holds for this case, we note that for any $\mathcal{H}_\mathcal{S}$ we have 
(see, e.g. \cite{Paulraj2000}, Equation (5))
\begin{align}
    \log\det\left(\svv{I}+\mathcal{H}_\mathcal{S}^H\mathcal{H}_\mathcal{S}\right)\geq \log\det\left(\svv{I}+\|\mathcal{H}_S\|^2_F\right),
    \label{eq:ForbNormSmallerThanDet}
\end{align}
where $\|\svv{A}\|_F$ denotes the Frobenius norm of a matrix $\svv{A}$.
Denote the ``Frobenius-norm mutual information" by 
\begin{align}
    \tilde{C}(\mathcal{S})=\log\det\left(\svv{I}+\|\mathcal{H}_\mathcal{S}\|^2_F\right)
    \label{eq:tildeC_S}
\end{align}
and
\begin{align}
    \tilde{C}=\log\det\left(\svv{I}+\|\mathcal{H}\|^2_F\right)
    \label{eq:tildeC}.
\end{align}
It follows that for any channel realization $C \geq \tilde{C}$ and similarly, for any subset of users $\mathcal{S}$, we have $C(\mathcal{S}) \geq \tilde{C}(\mathcal{S})$.
%We note that when $C$ is small, the gap between the $C$ and $\tilde{C}$ is small. 

%we can extend the achievable bound derived for the case of i.i.d Rayleigh-fading channel with single antenna at each terminal to the general case. 

Considering now the performance of a protocol where all users transmit at a rate that is just below  $\tilde{C}/N$, the counterparts of Lemma~\ref{lem:lem1} and Theorem~\ref{thm:thm2} are the following. 

% hence the probability of achieving $\tilde{C}$ is not far from the probability of achieving $C$. We begin with the following lemma from which Theorem~\ref{thm:thm3} follows. 

\begin{lemma}
For a symmetric $N$-user $N_r \times N_t$  Rayleigh-fading MIMO-MAC with Frobenius-norm mutual information $\tilde{C}$,
for any  subset of users $\mathcal{S}\subseteq\{1, 2,\ldots,\Nt\}$ with cardinality $k$, we have
 \begin{align}
     &\prob\left(\frac{\Nt}{|\mathcal{S}|}\tilde{C}(\mathcal{S})<R_{}|\tilde{C}\right) \nonumber \\
     &= \frac{\mathcal{B}(\frac{2^{R_{}|\mathcal{S}|/N}-1}{2^{\tilde{C}}-1};|\mathcal{S}|N_r N_t,(\Nt-|\mathcal{S}|)N_rNt)}{\mathcal{B}(1;|\mathcal{S}|N_rN_t,(\Nt-|\mathcal{S}|)N_rN_t)}\nonumber \\
     &\triangleq \tilde{P}_{\rm out}(k,R_{}|\tilde{C})
 \end{align}
 where $\mathcal{B}(x;a,b)$ is the incomplete beta function, $\tilde{C}(\mathcal{S})$ is defined in \eqref{eq:tildeC_S} and  $\tilde{C}$ is defined in \eqref{eq:tildeC}.
 \label{lem:lem2}
\end{lemma}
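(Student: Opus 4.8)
The plan is to mirror the proof of Lemma~\ref{lem:lem1}, replacing the single complex coefficient per user by the full block of entries contributed by each user's $N_r\times N_t$ channel matrix. The key observation is that the stacked matrix $\mathcal{H}=[\svv{H}_1\ \cdots\ \svv{H}_N]$ consists of $N\cdot N_r\cdot N_t$ i.i.d.\ entries, each distributed as $\mathcal{CN}(0,\SNR)$. Writing $\svv{g}\triangleq\mathrm{vec}(\mathcal{H})\in\mathbb{C}^{N N_r N_t}$, the vector $\svv{g}$ is isotropic, so its direction $\svv{g}/\|\svv{g}\|$ is uniform on the complex unit sphere and independent of its norm $\|\svv{g}\|=\|\mathcal{H}\|_F$. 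Since $\tilde{C}=\log(1+\|\mathcal{H}\|_F^2)$ is a deterministic increasing function of $\|\mathcal{H}\|_F^2$, conditioning on $\tilde{C}$ merely fixes $\|\mathcal{H}\|_F^2=2^{\tilde{C}}-1$ and leaves $\svv{g}/\|\svv{g}\|$ uniform.

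First I would recast the outage event as a statement about the energy fraction carried by $\mathcal{S}$. With $k=|\mathcal{S}|$, define $Y\triangleq\|\mathcal{H}_\mathcal{S}\|_F^2/\|\mathcal{H}\|_F^2$, the fraction of $\|\svv{g}\|^2$ lying in the $k N_r N_t$ coordinates indexed by the users of $\mathcal{S}$. By \eqref{eq:tildeC_S}, the inequality $\frac{N}{k}\tilde{C}(\mathcal{S})<R$ is equivalent to $\|\mathcal{H}_\mathcal{S}\|_F^2<2^{Rk/N}-1$, and dividing by the conditioned total energy this becomes
\begin{align*}
Y<\frac{2^{Rk/N}-1}{2^{\tilde{C}}-1}.
\end{align*}
Hence the conditional outage probability is the conditional CDF of $Y$ evaluated at this threshold.

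It remains to identify the law of $Y$. The normalized squared magnitudes of an isotropic complex Gaussian vector (equivalently, of a uniform complex unit vector) follow the symmetric Dirichlet distribution, so any partial sum over a fixed set of $m$ out of $M$ coordinates is $\mathrm{Beta}(m,M-m)$; this is the same Dirichlet/Jacobi fact invoked in Lemma~\ref{lem:lem1}. Here $M=N N_r N_t$ and $m=k N_r N_t$, giving $Y\sim\mathrm{Beta}(k N_r N_t,(N-k)N_r N_t)$, whose CDF is $\mathcal{B}(\cdot;k N_r N_t,(N-k)N_r N_t)/\mathcal{B}(1;k N_r N_t,(N-k)N_r N_t)$. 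Evaluating at the threshold above and writing $k=|\mathcal{S}|$ yields the claimed formula for $\tilde{P}_{\rm out}(k,R|\tilde{C})$.

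I expect the main obstacle to be purely the degrees-of-freedom bookkeeping: one must check that selecting $k$ user-blocks isolates exactly $k N_r N_t$ of the $N N_r N_t$ independent complex coordinates, so that the Beta shape parameters scale by the antenna product $N_r N_t$ relative to the scalar case. The probabilistic content---isotropy of $\svv{g}$ and the Beta law of energy fractions---is unchanged from the single-antenna derivation; the only additional care needed is the harmless reading of $\log\det(\svv{I}+\|\mathcal{H}_\mathcal{S}\|_F^2)$ in \eqref{eq:tildeC_S} as the scalar $\log(1+\|\mathcal{H}_\mathcal{S}\|_F^2)$.
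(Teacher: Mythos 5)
Your proposal is correct and follows essentially the same route as the paper's proof: vectorize $\mathcal{H}$, observe that conditioning on $\tilde{C}$ leaves the direction of the vectorized channel uniform on the complex sphere (the paper phrases this via the first row of a Haar unitary matrix), reduce the outage event to a threshold on the energy fraction in the $kN_rN_t$ selected coordinates, and invoke the same Dirichlet/Jacobi fact as in Lemma~\ref{lem:lem1} to get the $\mathrm{Beta}(kN_rN_t,(N-k)N_rN_t)$ law and hence the incomplete beta ratio. Your scalar reading of $\log\det\left(\svv{I}+\|\mathcal{H}_\mathcal{S}\|_F^2\right)$ matches the paper's own usage in its proof.
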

\begin{proof}
Denoting by $\svv{h}_{\rm vec}$ the vectorization of $\mathcal{H}$, we have
\begin{align}
    \tilde{C}=\log\left(1+\sum_{i=1}^{N_rN_tN}|{h}_{\rm vec,i}|^2\right).
\end{align}

As noted in the previous section, conditioned on \mbox{$\tilde{C}$, $\svv{h}_{\rm vec}\triangleq(h_1,\ldots,h_{N_rN_tN})$} is uniformly distributed over
an $N_rN_tN$-dimensional complex sphere of radius $\sqrt{2^ {\tilde{C}}-1}$ and hence $\svv{h}_{\rm vec}/\| \svv{h}_{\rm vec} \|$ may be viewed as the first row of a unitary matrix $\svv{U}$ drawn at random according to the Haar measure.

By symmetry, for any set $\mathcal{S}$ with cardinally $k$, the distribution of
$\tilde{C}(\mathcal{S})$ is equal to that of
\begin{align}
    \tilde{C}(\{1,2,\ldots,k\})
    &=\log\left(1+\sum_{i=1}^{N_rN_tk} |h_{\rm vec,i}|^2\right) \nonumber \\
                  &=\log\left(1+\left(2^{\tilde{C}}-1\right)\sum_{i=1}^{N_rN_tk} |U_{1,i}|^2\right).
  %\label{eq:58}
\end{align}
Denoting the partial sum of $k$ entries
as $\displaystyle{X=\sum_{i=1}^{N_rN_tk} |U_{1,i}|^2}$, we therefore have
\begin{align}
   \prob\left(\frac{\Nt}{k}\tilde{C}(\mathcal{S})<R \Big|\tilde{C}\right)&=\prob\left( 1+ \left(2^{\tilde{C}}-1\right)X <2^{R\frac{k}{\Nt}}\right)
   \nonumber \\
   & = \prob\left(X<\frac{2^{R\frac{k}{\Nt}}-1}{2^{\tilde{C}}-1}\right).
\end{align}
the rest of the proof follows the footsteps of the proof of Lemma~\ref{lem:lem1}.
\end{proof}

\begin{theorem}
For a symmetric $N$-user $N_r \times N_t$  Rayleigh-fading MIMO-MAC, we have 
\begin{align}
    \prob\left(C_{\Sigma-\rm sym}<R|\tilde{C}\right)\leq \sum_{k=1}^{\Nt} {{\Nt}\choose{k}} \tilde{P}_{\rm out}(k,R|\tilde{C})
\end{align}
\label{thm:thm3}
\end{theorem}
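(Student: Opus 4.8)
The plan is to follow the upper-bound half of the proof of Theorem~\ref{thm:thm2} almost verbatim, with the scalar mutual information $C(\mathcal{S})$ replaced by its Frobenius-norm surrogate $\tilde{C}(\mathcal{S})$ and Lemma~\ref{lem:lem2} invoked in place of Lemma~\ref{lem:lem1}. The one genuinely new ingredient is a \emph{pathwise} comparison that lets us pass from the intractable determinant-based symmetric capacity to the tractable surrogate. Concretely, the bound \eqref{eq:ForbNormSmallerThanDet} gives $C(\mathcal{S}) \geq \tilde{C}(\mathcal{S})$ for every subset $\mathcal{S}$ and every channel realization, so scaling by $\Nt/|\mathcal{S}|$ and minimizing over subsets yields the deterministic inequality
\begin{align}
C_{\Sigma-\rm sym} = \min_{\mathcal{S}}\frac{\Nt}{|\mathcal{S}|}C(\mathcal{S}) \geq \min_{\mathcal{S}}\frac{\Nt}{|\mathcal{S}|}\tilde{C}(\mathcal{S}).
\end{align}

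Because this holds realization by realization, the event $\{C_{\Sigma-\rm sym}<R\}$ is contained in the event $\{\min_{\mathcal{S}}\frac{\Nt}{|\mathcal{S}|}\tilde{C}(\mathcal{S})<R\}$, and the inclusion is preserved under conditioning on $\tilde{C}$. Hence $\prob(C_{\Sigma-\rm sym}<R|\tilde{C})$ is at most $\prob(\min_{\mathcal{S}}\frac{\Nt}{|\mathcal{S}|}\tilde{C}(\mathcal{S})<R|\tilde{C})$. I would then rewrite the ``minimum below $R$'' event as the union $\bigcup_{\mathcal{S}}\{\frac{\Nt}{|\mathcal{S}|}\tilde{C}(\mathcal{S})<R\}$ and apply the union bound, exactly as on the right-hand side of Theorem~\ref{thm:thm2}. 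Finally, grouping the $2^{\Nt}-1$ nonempty subsets by their cardinality $k$ and using the symmetry established in the proof of Lemma~\ref{lem:lem2} (all $\binom{\Nt}{k}$ subsets of size $k$ share the same conditional law of $\frac{\Nt}{k}\tilde{C}(\mathcal{S})$ given $\tilde{C}$, namely the one computed there), each size-$k$ term contributes $\tilde{P}_{\rm out}(k,R|\tilde{C})$, collapsing the union bound to $\sum_{k=1}^{\Nt}\binom{\Nt}{k}\tilde{P}_{\rm out}(k,R|\tilde{C})$.

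I do not anticipate a serious obstacle, since once Lemma~\ref{lem:lem2} supplies the conditional law of each $\frac{\Nt}{|\mathcal{S}|}\tilde{C}(\mathcal{S})$, the remaining steps are the same union-bound and symmetry arguments used for the scalar MAC. The only point requiring care is that the domination and all subsequent probabilistic bounds are taken conditionally on the \emph{surrogate} $\tilde{C}$ rather than on the true sum capacity $C$; this is consistent precisely because the displayed inequality involves no probabilistic averaging, so the event inclusion, and therefore the conditional-probability bound, remains valid after conditioning on $\tilde{C}$. It is also worth noting that, unlike Theorem~\ref{thm:thm2}, only the union (upper) bound is claimed: the surrogate $\tilde{C}_{\Sigma-\rm sym}\triangleq\min_{\mathcal{S}}\frac{\Nt}{|\mathcal{S}|}\tilde{C}(\mathcal{S})$ merely lower-bounds $C_{\Sigma-\rm sym}$, so it yields an upper bound on the outage probability but not the matching $\max_k$ lower bound that would require a two-sided sandwich.
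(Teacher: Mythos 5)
Your proposal is correct and matches the paper's own proof essentially step for step: the paper likewise uses the pathwise bound $\tilde{C}(\mathcal{S})\leq C(\mathcal{S})$ from \eqref{eq:ForbNormSmallerThanDet} to dominate the outage event by that of the Frobenius-norm surrogate, then applies the union bound with the conditional law from Lemma~\ref{lem:lem2}. Your added remarks---that the event inclusion survives conditioning on $\tilde{C}$ because the domination is deterministic, and that no matching $\max_k$ lower bound is available since the surrogate only bounds $C_{\Sigma-\rm sym}$ from below---are accurate and, if anything, spell out details the paper leaves implicit.
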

\begin{proof}
By \eqref{eq:ForbNormSmallerThanDet} and \eqref{eq:tildeC_S}, for every $\mathcal{S}$ and channel realization, it holds that 
\begin{align}
    \tilde{C}(\mathcal{S})\leq C(\mathcal{S})
\end{align}
Therefore,
\begin{align}
    \prob\left(C_{\Sigma-\rm sym}<R \Big|\tilde{C}\right)\leq\prob\left(\tilde{C}_{\rm sym}<R \Big| \tilde{C}\right),
\end{align}
and similar to Theorem~\ref{thm:thm1}, applying the union bound, we get
\begin{align}
\prob\left(\tilde{C}_{\rm sym}<R \Big|\tilde{C}\right)\leq \sum_{k=1}^{\Nt} {{\Nt}\choose{k}} \tilde{P}_{\rm out}(k,R|\tilde{C}).
\end{align}
\end{proof}

Figure~\ref{fig:empVsFrob_3X2_2users_rTar} depicts a comparison between the empirical outage probability and the upper bound provided by  Theorem~\ref{thm:thm3} for the case of two users, each equipped with $2$  antennas and a receiver equipped with $3$ antennas, where the target rate is set to $3$ bits. The  outage probability was evaluated empirically by  Monte-Carlo simulation. %of Rayleigh-fading channels. 
To calculate the bound, the Frobenius norm of each channel matrix drawn was calculated.
%and was used to calculate the resulting outage probability bound.

It can seen that at high $\SNR$, the slope of the bound is similar to that of the empirical results. Recalling the MIMO-MAC DMT, we note that since the target rate is fixed (is not a function of the $\SNR$), the slope at high $\SNR$ is in fact the maximal diversity offered in this configuration, i.e., the diversity corresponding to zero multiplexing gain. Recalling the DMT of the symmetric capacity \eqref{eq:DMT_symCap}, the latter is $N\cdot N_r \cdot N_t$ which matches the slope given by Theorem~\ref{thm:thm2}.
On the other hand,  relying on the Frobenius norm results in a loose bound  at low  values of SNR (high outage probabilities). 
%both asymptotic  slope of the empirical results as w approaches $N\cdot N_r \cdot N_t$ at high $\SNR$. 
%As was demonstrated above, using the Frobenius norm equals to analyzing the vectorized ($N_r \cdot N_t \times 1$) channel for each user. Therefore, using \eqref{eq:DMT_symCap}, the maximal diversity of the resulting MIMO-MAC is also $N\cdot N_r \cdot N_t$.
 
% This is not the case at low $\SNR$. Since the Frobenius is smaller than or equal to the sum-capacity, at low $\SNR$ there are cases that the Frobenius is smaller than the target rate hence the bound results with outage probability equal $1$ while in practice it can be supported by this channel (albeit, not in high probability resulting with high outage probability). At high $\SNR$ the ``tail'' behaviour of the Frobenius and the sum-capacity is the same hence we get the same slope.

\begin{figure}[htbp]
\begin{center}
\includegraphics[width=\figSize\columnwidth]{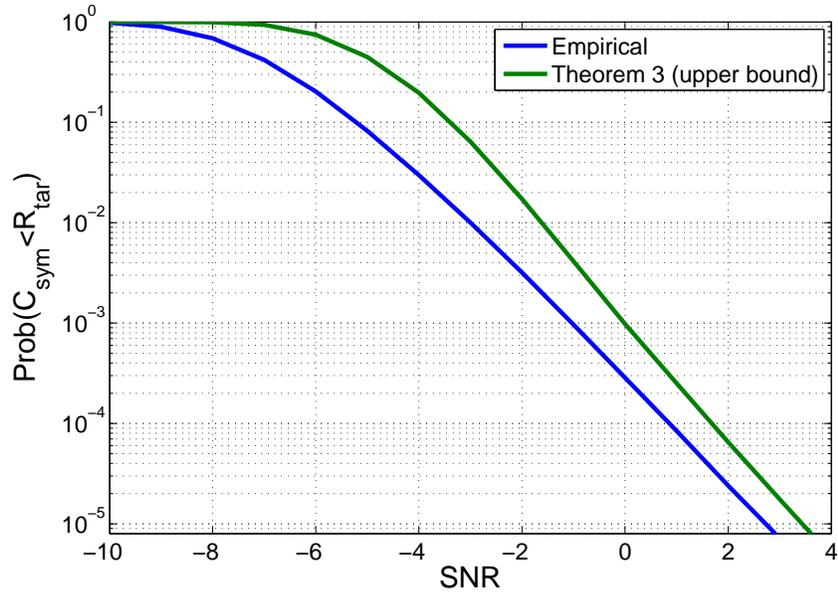}
\end{center}
\caption{Comparison of empirical outage and
the upper bound provided by Theorem~\ref{thm:thm3}
for a two-user $3 \times 2$ i.i.d. Rayleigh-fading MAC. The target rate is set to $3$ bits.}
\label{fig:empVsFrob_3X2_2users_rTar}
\end{figure}

%\begin{remark}
We may obtain a tighter bound for low SNR values, for the special case of $N=2$ users, where each is equipped with a single antenna and the  receiver is equipped with $N_r \geq 2$ antennas. 
Specifically, in \cite{domanovitz2017explicit} a different upper bound for the outage probability was derived in the context of a randomly precoded   compound single-user $N_r\times 2$ MIMO channel. It is easy to verify that the derived bound carries over to the setting considered in the present paper, when rewritten as follows:\footnote{The main step is to recall that the SVD decomposition of an i.i.d. circularly-symmetric complex Gaussian matrix yields left and right singular vector matrices that are uniformly (Haar) distributed, as is the precoding matrix considered in the analysis of \cite{domanovitz2017explicit}.}

\begin{theorem}[Theorem 2 in \cite{domanovitz2017explicit}]
For a two-user i.i.d. Rayleigh-fading MAC where each user is equipped with a single antenna and the receiver is equipped with $N_r$ antennas,  
\begin{align}
    \prob\left(C_{\Sigma-\rm sym}<R|{C}\right) \leq 1-\sqrt{1-2^{-(C-R)}}.
\end{align}
\label{thm:thm4}
\end{theorem}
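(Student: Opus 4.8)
The plan is to derive the bound by reducing the present problem to Theorem~2 of \cite{domanovitz2017explicit}, which treats a randomly (Haar) precoded compound single-user $N_r\times 2$ MIMO channel; the reduction rests entirely on the rotational invariance of the i.i.d. Gaussian channel. First I would stack the two single-antenna user channels into the combined matrix $\mathcal{H}=[\,\svv{h}_1\ \svv{h}_2\,]$, an $N_r\times 2$ matrix with i.i.d. $\mathcal{CN}(0,\SNR)$ entries, and record the relevant rate functionals: the single-user constraints $C(\{i\})=\log(1+\|\svv{h}_i\|^2)$ and the sum constraint $C=\log\det(\svv{I}+\mathcal{H}\mathcal{H}^H)$. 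Since $R\leq C$ by hypothesis, the sum constraint is never the active bottleneck, so $\{C_{\Sigma-\rm sym}<R\}$ coincides with $\{\min(C(\{1\}),C(\{2\}))<R/2\}$; equivalently, its complement is the event that both $\|\svv{h}_i\|^2$ are at least $2^{R/2}-1$.

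Next I would pass to the SVD $\mathcal{H}=\svv{U}\,\svv{\Sigma}\,\svv{V}^H$ (with $N_r\geq 2$, so that $\svv{\Sigma}$ carries two singular values). For an i.i.d. circularly-symmetric complex Gaussian matrix the law is bi-unitarily invariant, whence $\svv{U}$, $\svv{\Sigma}$, $\svv{V}$ are mutually independent, $\svv{U}$ is Haar on the $N_r\times N_r$ unitary group, and $\svv{V}$ is Haar on the $2\times 2$ unitary group. Because the sum capacity $C=\log\det(\svv{I}+\svv{\Sigma}\svv{\Sigma}^H)$ is a function of the singular values alone, conditioning on $C$ only constrains $\svv{\Sigma}$ and leaves $\svv{V}$ Haar-distributed and independent of that conditioning. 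Writing $\svv{h}_i=\mathcal{H}\svv{e}_i=\svv{U}\,\svv{\Sigma}\,\svv{V}^H\svv{e}_i$, the norms $\|\svv{h}_i\|^2$ depend on $\svv{\Sigma}$ and on the entries of $\svv{V}$ exactly as the squared column norms of a fixed-singular-value channel subjected to a random Haar rotation---which is precisely the effective channel analyzed in \cite{domanovitz2017explicit}.

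With this dictionary in place, the conditional law of $(C(\{1\}),C(\{2\}))$ given $C$ matches the law of the two per-branch rates in the precoded compound model at the corresponding conditioning, and the outage event translates verbatim. Invoking Theorem~2 of \cite{domanovitz2017explicit}, which bounds the probability of this event by $1-\sqrt{1-2^{-(C-R)}}$ (with the per-user target $R/2$ matched to the per-branch target there), then yields the claimed inequality.

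The main obstacle is not any fresh estimate but the careful verification of the equivalence that the footnote flags: one must confirm that the right singular matrix $\svv{V}$ is genuinely Haar and independent of $\svv{\Sigma}$, so that conditioning on $C$ does not bias $\svv{V}$, and that the identification of $\svv{V}$ with the random precoder of \cite{domanovitz2017explicit}---together with the correct normalization of the target rate---is exact. Once this correspondence is pinned down, the bound transfers with no further work.
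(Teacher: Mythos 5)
Your proposal is correct and is essentially the paper's own argument: the paper proves Theorem~\ref{thm:thm4} by exactly the reduction you describe, noting (in its footnote) that the SVD of the i.i.d.\ circularly-symmetric Gaussian matrix $\mathcal{H}$ has a Haar-distributed right singular matrix, independent of the singular values on which $C$ depends, so that $\svv{V}$ plays the role of the random precoder in the compound $N_r\times 2$ analysis of \cite{domanovitz2017explicit} and Theorem~2 there transfers verbatim. Your write-up merely makes explicit the details the paper leaves implicit (bi-unitary invariance, the identification $\{C_{\Sigma-\rm sym}<R\}=\{\min(2C(\{1\}),2C(\{2\}))<R\}$ for $R\leq C$, and the rate normalization), all of which check out.
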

% We note that Theorem 2 in \cite{domanovitz2017explicit} bounds the outage probability of a CUE-precoded $N_r\times 2$ compound MIMO channel with white-input mutual information $C$ and $N_r\geq 2$ (by finding the worst case combination of the singular values of the equivalent channel). Following remark 3 in \cite{domanovitz2018outage} we further note that since we restrict our attenuation to i.i.d. Rayleigh-fading channels, the random transformation is actually performed
% by nature.
The main advantage Theorem~\ref{thm:thm4} Figure with respect to Theorem~\ref{thm:thm3} is that the conditioning is on $C$, the true sum capacity of the channel,  rather than its Frobenius-norm counterpart. 

Figure~\ref{fig:empVsFrob_6X1_2users_rTar}  depicts the empirical outage probability, the upper bound of Theorem~\ref{thm:thm3} and the bound of Theorem~\ref{thm:thm4} for the case of a symmetric two-user $6 \times 1$ Rayleigh-fading MAC, where the target rate is set to $3$ bits. It can be seen that at low SNR, Theorem~\ref{thm:thm4} provides a tighter bound than Theorem~\ref{thm:thm3} but it becomes loose rapidly as it does not capture the maximal diversity offered by the system.
%hence at high $\SNR$ Theorem~\ref{thm:thm3} is better.

\begin{figure}[htbp]
\begin{center}
\includegraphics[width=\figSize\columnwidth]{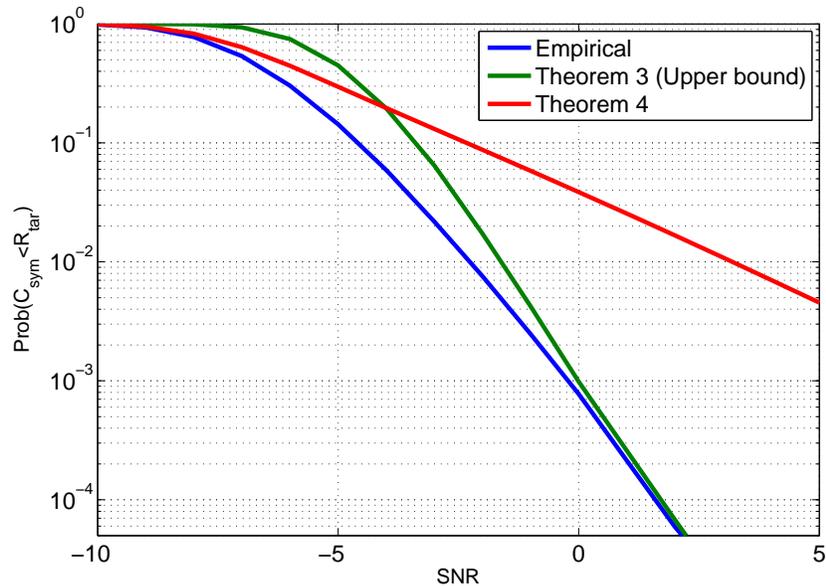}
\end{center}
\caption{Comparison of empirical outage and  the upper bound provided by Theorems~\ref{thm:thm3} and \ref{thm:thm4} for a symmetric two-user $6 \times 1$ i.i.d.  Rayleigh-fading MAC.
%, each user with $1$ transmit antennas and a receiver with $6$ antennas, with 
The target rate is set to $3$ bits.}
\label{fig:empVsFrob_6X1_2users_rTar}
\end{figure}

%\end{remark}

\section{Practical Realization of the Communication Protocol via Precoded Integer Forcing}
\label{sec:IF}

In this section we  empirically demonstrate the effectiveness of the integer-forcing (IF) receiver when used in conjunction with unitary space-time precoding as a practical transmission scheme in the context of  the considered communication protocol. Due to space limitations, we refer the
reader to \cite{IntegerForcing} for a description of the integer forcing framework and its implementation.

%As mentioned in Section~\ref{sec:lowerbound}, $C_{\Sigma-\rm sym}$ serves as an %upper bound on the achievable rate of an IF receiver.
When it comes to fading channels,
it has been shown in \cite{IntegerForcing} that the IF receiver achieves the DMT over i.i.d. Rayleigh-fading channels where the number of receive antennas is greater or equal to the number of transmit antennas.

We observe that this does not hold in the general case; in particular, IF does not achieve the DMT for the case of a MAC where all terminals are equipped with a single antenna.
Specifically, Figure~\ref{fig:logErr_ML_IF_T2} depicts (in logarithmic scale) the empirical outage probability of the IF  receiver and the exact outage probability for optimal communication (Gaussian codebooks and  ML decoding), as given by Theorem~\ref{thm:thm1}, for the two-user i.i.d. Rayleigh-fading MAC. The symmetric rate achieved by a given scheme is denoted by $R_{\rm scheme}$.

It is evident that the slopes are different. This raises the question of whether IF is inherently ill-suited for the MAC channel. A negative answer to this question may be inferred by recalling some further lessons from the DMT analysis of the MAC.

While the optimal DMT for the i.i.d. Rayleigh-fading MAC was
derived in \cite{2004:DMT_MAC} using Gaussian codebooks of
sufficient length, it was subsequently shown that the  DMT of the MAC can be achieved using structured codebooks by combining uncoded QAM constellations with
% analyzed for the case in which the block length $l$ is large enough, i.e., in our case $l\geq\Nt$. This suggests that IF performance can be improved when combined with space-time precoding.  It is worth noting that each transmitter is applying the precoding independently hence the distributed nature of the problem is not violated.
space-time unitary precoding (and ML decoding). Specifically, such a  MAC-DMT achieving  construction is given in  \cite{lu2011dmt}.
This raises the possibility that the sub-optimality of the IF receiver observed in
 Figure~\ref{fig:logErr_ML_IF_T2} may at least in part be remedied by applying unitary space-time precoding at each of the transmitters.
 We note  that each transmitter applies precoding only to its own data streams so the distributed nature of the problem is not violated.

Following this approach, we have implemented the IF receiver with  unitary space-time precoding applied at each transmitter. We have employed random (Haar) precoding (with independent matrices drawn for the different users) over two ($T=2$) time instances as well as deterministic precoding using the
matrices proposed in
%As a reference we present the outage achieved when using the code
\cite{badr2008distributed}.\footnote{When using an ML receiver, this space-time code is known to achieve the DMT for multiplexing rates $r\leq\frac{1}{5}$. As detailed in \cite{lu2011dmt}, whether this code achieves the optimal MAC-DMT also when $r> \frac{1}{5}$ remains an open question.}

These matrices can be expressed as
\begin{align}
    \svv{P}_{st,c}^1&=\frac{1}{\sqrt{5}}
    \begin{bmatrix}
    \alpha & \alpha\phi \\
    \bar{\alpha} & \bar{\alpha}\bar{\phi}
    \end{bmatrix},~&\svv{P}_{st,c}^2=\frac{1}{\sqrt{5}}
    \begin{bmatrix}
    j\alpha & j\alpha\phi \\
    \bar{\alpha} & \bar{\alpha}\bar{\phi}
    \end{bmatrix}
    \label{eq:Badr_Bel}
\end{align}
where
\begin{align}
    \phi&=\frac{1+\sqrt{5}}{2},&\bar{\phi}&=\frac{1-\sqrt{5}}{2} \nonumber \\
    \alpha&=1+j-j\phi,&\bar{\alpha}&=1+j-j\bar{\phi}.
\end{align}
%The results appear
%We also compare random space-time precoding over two channel uses with IF.

We also replot Figure~\ref{fig:logErr_ML_IF_T2} in terms of PDF (rather than CDF) as Figure~\ref{fig:logErr_ML_IF_T2,PDF}, but without random Haar space-time precoding (so as to avoid ``clutter"). As can be seen, the precoding matrices in \eqref{eq:Badr_Bel} improve the outage probability for most target rates.
\newline
%We also simulated the performance of random (Haar measure) unitary space-time precoding and such precoding also
%As using random we do not plot it in the figure (so as to avoid ``clutter").

% We further note that in addition to standard IF, we also implemented a variant that incorporates successive interference cancellation, referred to as IF-SIC \cite{OrdentlichErezNazer:2013}. As can be seen, IF-SIC   results in a significant improvement for all precoding schemes used.

% As can be seen from Figure~\ref{fig:logErr_ML_IF_T2}, both random space-time precoding and the precoding matrices in \eqref{eq:Badr_Bel} improve the outage probability for most target rates.

\begin{figure}[htbp]
\begin{center}
\includegraphics[width=\figSize\columnwidth]{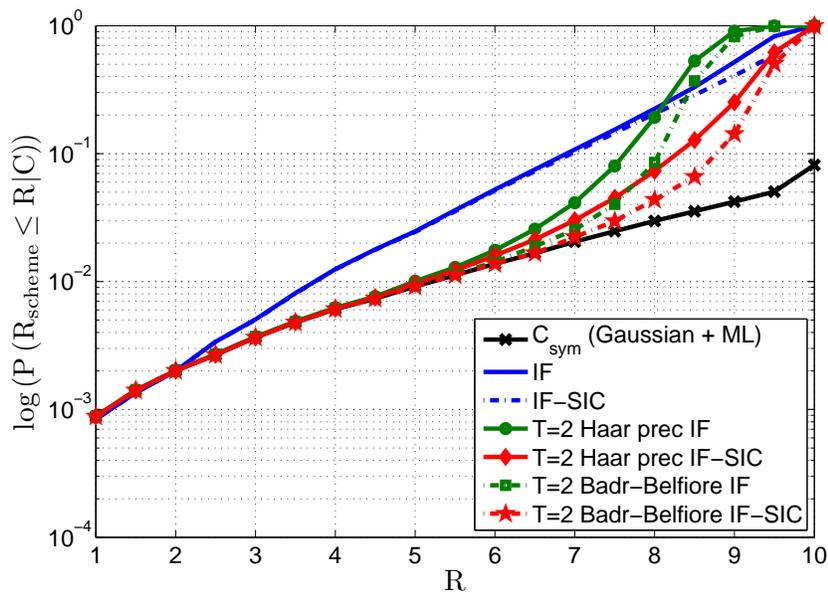}
\end{center}
\caption{Outage probability of  linear codes (with and without space-time precoding) with IF equalization versus  Gaussian codebooks with ML decoding for a two-user i.i.d. Rayleigh-fading  MAC with sum capacity $C_{}=10$.}
%\caption{Probability distribution function  of the rate achieved with  linear codes (with and without space-time precoding) in conjunction with IF equalization versus  that achieved Gaussian codebooks with ML decoding for a two-user i.i.d. Rayleigh fading  MAC with sum capacity $C_{}=10$.}
\label{fig:logErr_ML_IF_T2}
\end{figure}

\begin{figure}[htbp]
\begin{center}
\includegraphics[width=\figSize\columnwidth]{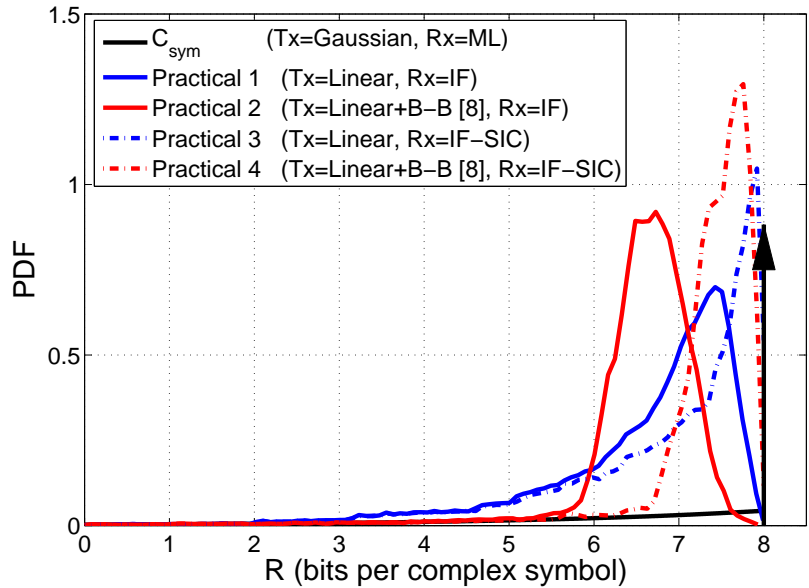}
\end{center}
%\caption{Outage probability of  linear codes (with and without space-time precoding) with IF equalization versus  Gaussian codebooks with ML decoding for a two-user i.i.d. Rayleigh fading  MAC with sum capacity $C_{}=10$.}
\caption{Probability distribution function  of the rate achieved with  linear codes (with and without space-time precoding) in conjunction with IF equalization versus  that achieved Gaussian codebooks with ML decoding for a two-user i.i.d. Rayleigh-fading  MAC with sum capacity $C_{}=10$.}

\label{fig:logErr_ML_IF_T2,PDF}
\end{figure}

We further note that in addition to standard IF, we also implemented a variant that incorporates successive interference cancellation, referred to as IF-SIC \cite{OrdentlichErezNazer:2013}. As can be seen, IF-SIC results in a significant improvement for all precoding schemes used.

In Figure~\ref{fig:Rout_Csum_ML_IF_T2} we study the average  symmetric rate achieved by different schemes w.r.t. a two-user i.i.d. Rayleigh-fading channel when
we condition on the sum capacity of the channel. We plot the fraction of the sum capacity  attained by the various schemes.
% rate achieved from the sum capacity when all users transmit at the symmetric rate capacity (per user) for the two-user i.i.d. Rayleigh fading MAC versus the fraction achieved when using linear codes (at the maximal achievable rate) in conjunction with IF with the two  different precoding methods as described above.
%The symmetric capacity is compared against random space-only precoding, %random space-time precoding (which utilizes two Channel uses) and the Badr-Belfiore precoding described in \eqref{eq:Badr_Bel}.
We first observe that IF-SIC combined with space-time precoded linear codes achieves a large fraction of the symmetric capacity. Further, as can be seen, the fraction of the sum capacity achieved by all the different schemes considered approaches one as the sum capacity grows.
%ergodic capacity which is close to the symmetric capacity. We also note %that using random space-time precoding attains most of the benefit gained %by the Badr-Belfire precoding.

\begin{figure}[htbp]
\begin{center}
\includegraphics[width=\figSize\columnwidth]{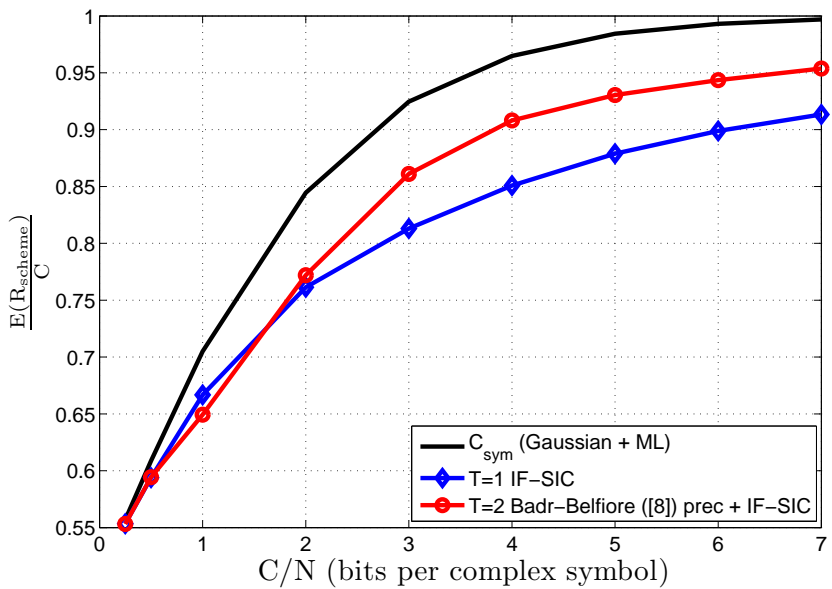}
\end{center}
\caption{Average rate conditioned on the sum capacity when using linear codes (with and without space-time precoding) with IF equalization versus Gaussian codebooks with ML decoding, over a  two-user normalized (conditioned)  i.i.d. Rayleigh-fading MAC.}
\label{fig:Rout_Csum_ML_IF_T2}
\end{figure}

Finally, in Figure~\ref{fig:outageCapacity2_4_6}, we plot the fraction of the sum capacity that is achieved, allowing for a fixed outage probability, by the proposed protocol where  we consider both the ideal performance achieved as captured by the symmetric capacity and the rate achieved using IF in conjunction with SIC. As can be observed, the performance of IF-SIC for small outage probabilities is very close to the theoretical limits of the considered transmission protocol. We note, however, that as the number of users increases (and also, as the sum-capacity increases), the problem of finding a ``good'' integer matrix (as required in IF equalization) becomes computationally difficult and may result in compromised rates when using practical sub-optimal algorithms such as the LLL algorithm  to find candidate integer matrices. 
%This may explain the 
%Therefore, for higher dimensions (or equivalently %when combining the IF with space-time which results %with searching for an integer matrix in dimension %5which is quadratic to the physical channel) we got %much worse results. We believe that more %sophisticated methods for lattice basis reduction would result with better performance, thus there is a trade off between performance and complexity.

\begin{figure}[htbp]
\begin{center}
\includegraphics[width=\figSize\columnwidth]{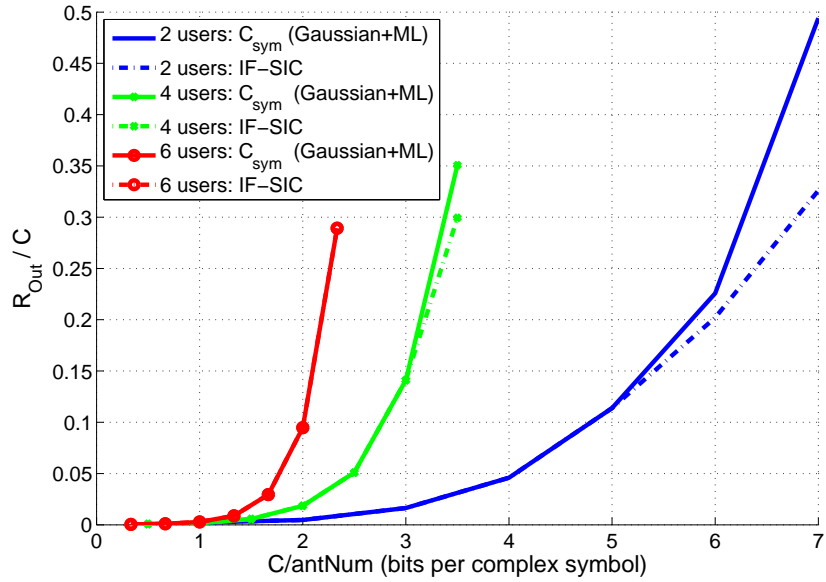}
\end{center}
\caption{Fraction of the sum capacity achieved at  1\% outage probability by the proposed transmission protocol over a scalar i.i.d. Rayleigh-fading MAC.  The performance limits, as captured by by $C_{\rm sym}$  is depicted as a function of the sum capacity normalized by the number of users, for $N=2,4,6$ users. The performance limits of IF-SIC equalization are also depicted.}
\label{fig:outageCapacity2_4_6}
\end{figure}
\section{Conclusions}
\label{sec:conc}
We analyzed the performance of a simple communication protocol for transmission over the scalar Rayleigh-fading MAC, where  all users transmit at just below the symmetric capacity (normalized per user) of the channel. 
Tight bounds were established on the distribution of the achievable rate of the protocol. The derived bounds may be  viewed as a significant tightening of the diversity multiplexing tradeoff analysis of the channel. It was further demonstrated that integer-forcing equalization in conjunction with ``space-time'' precoding (over the time axis only) offers a practical means to approach the theoretical limits of the proposed protocol. 

%\clearpage
\bibliographystyle{IEEEtran}
\bibliography{eladd}

\end{document}